\newtheorem{lemma}{Lemma}
\newtheorem{corollary}{Corollary}
\newtheorem{definition}{Definition}
\crefname{algocf}{alg.}{algs.}
\Crefname{algocf}{Algorithm}{Algorithms}
\DeclareMathOperator{\dd}{{\textup{d}}}
\DeclareMathOperator{\diam}{{\textup{diam}}}
\DeclareMathOperator{\radius}{{\textup{rad}}}
\DeclareMathOperator{\ecc}{{\textup{ecc}}}
\DeclareMathOperator{\OO}{{\mathcal{O}}}
\DeclareMathOperator{\mw}{mw}
\DeclareMathOperator{\sw}{sw}
\DeclareMathOperator{\pw}{pw}
\DeclareMathOperator{\bw}{bw}
\newcommand{\set}[1]{\protect\ensuremath{\left\{ #1\right\}}\xspace}
\def\FF{\ensuremath{\mathcal{F}}\xspace}
\begin{document}

\title{Enumeration of far-apart pairs by decreasing distance for faster hyperbolicity computation\thanks{This work has been supported by the French government, through the UCA$^\textsc{jedi}$ Investments in the Future project managed by the National Research Agency (ANR) with the reference number ANR-15-IDEX-01, and the Distancia project with reference number ANR-17-CE40-0015.}}

\author[1]{David Coudert}
\author[2]{Andr\'e Nusser}
\author[3]{Laurent Viennot}

\affil[1]{Universit\'e C\^ote d'Azur, Inria, CNRS, I3S, France}
\affil[2]{Max Planck Insitute for Informatics and Graduate School of Computer Science, Saarland Informatics Campus, Saarbrücken, Germany}
\affil[3]{Inria, Paris University, CNRS, Irif, France}

\date{\vspace{-5ex}}

\maketitle

\begin{abstract}
Hyperbolicity is a graph parameter which indicates how much the shortest-path distance metric of a graph deviates from a tree metric. It is used in various fields such as networking, security, and bioinformatics for the classification of complex networks, the design of routing schemes, and the analysis of graph algorithms.
Despite recent progress, computing the hyperbolicity of a graph remains challenging. Indeed, the best known algorithm has time complexity $O(n^{3.69})$, which is prohibitive for large graphs, and the most efficient algorithms in practice have space complexity $O(n^2)$. Thus, time as well as space are bottlenecks for computing the hyperbolicity.

In this paper, we design a tool for enumerating all far-apart pairs of a graph by decreasing distances. A node pair $(u, v)$ of a graph is far-apart if both $v$ is a leaf of all shortest-path trees rooted at $u$ and $u$ is a leaf of all shortest-path trees rooted at $v$. This notion was previously used to drastically reduce the computation time for hyperbolicity in practice. However, it required the computation of the distance matrix to sort all pairs of nodes by decreasing distance, which requires an infeasible amount of memory already for medium-sized graphs. We present a new data structure that avoids this memory bottleneck in practice and for the first time enables computing the hyperbolicity of several large graphs that were far out-of-reach using previous algorithms. For some instances, we reduce the memory consumption by at least two orders of magnitude. Furthermore, we show that for many graphs, only a very small fraction of far-apart pairs have to be considered for the hyperbolicity computation, explaining this drastic reduction of memory.

As iterating over far-apart pairs in decreasing order without storing them explicitly is a very general tool, we believe that our approach might also be relevant to other problems.\\

\noindent\textbf{Keywords:} Gromov hyperbolicity; graph algorithms, far-apart pairs iterator.
\end{abstract}

\newpage

\section{Introduction}
\label{sec:intro}

This paper aims at computing the hyperbolicity
of graphs whose size ranges from tens of thousands to millions of nodes. The hyperbolicity is a parameter of a metric space generalizing the idea of Riemannian manifolds with negative curvature. When considering the metric of a graph, it measures, to some extent, how much it deviates  
from a tree metric. This parameter was first introduced by Gromov in the context of automatic groups~\cite{Gromov1987} in relation with their Cayley graphs. 

Hyperbolicity has received great attention in computer science in the last decades as it seems to capture important properties of several large practical graphs such as Internet~\cite{ShavittT04}, the Web~\cite{MunznerB95} and databases relations~\cite{WalterR02}. It is also used to classify complex networks~\cite{Abu2015,AlrasheedD15,Kennedy2013} and was proposed as a measure of how much a network is ``democratic''~\cite{ADM14,BCC15}. Formal relationships between Gromov hyperbolicity and the existence of a core (a subset of vertices intersecting a constant fraction of all the shortest-paths) are investigated in~\cite{ChepoiDV17}. Reciprocally, the existence of a core is shown to be inherent to any hyperbolic network in~\cite{ChepoiDV17}.
Furthermore, small hyperbolicity has tractability implications and measuring hyperbolicity has applications in routing~\cite{Boguna2010,ChepoiDEHVX12,Krauthgamer2006}, approximating other graph parameters~\cite{Chepoi2008,DasGuptaKMY18} and bioinformatics~\cite{Chakerian2012,Dress2012}.
See~\cite{Abu2015,Dragan2013} for recent surveys.

Computing the hyperbolicity is often a prerequisite in the above applications. As hyperbolicity can be defined by a simple 4-point condition, it can be naively computed in $\OO(n^4)$ time. As far as we know, the best theoretical algorithm~\cite{FournierIV15} has time complexity $\OO(n^{3.69})$. Although its complexity is $o(n^4)$, it is still supercubic and the algorithm appears to be impractical for graphs with a few tens of thousands of nodes. On the lower bounds side it was shown that under the Strong Exponential Time Hypothesis~\cite{IPZ2001} hyperbolicity cannot be computed in subquadratic-time, even for sparse graphs~\cite{Borassi2014,CoudertD14,FournierIV15}.

The only practical algorithms that can manage larger graphs~\cite{BorassiCCM15,CohenCL15} enumerate all pairs of nodes by decreasing distance.
For each pair, each 4-tuple obtained with a previous pair is tested with regard to the 4-point condition defining hyperbolicity. Each test of a 4-tuple provides a lower bound of hyperbolicity.
This approach allows to stop the enumeration as soon as the distance of the scanned pair equals twice the best hyperbolicity lower bound found so far. As it scans a portion of all 4-tuples, its worst case complexity is $\OO(n^4)$ but it appears much faster in practice as first scanning pairs with large distances allows to find good lower bounds early in the enumeration. A main optimization for further reducing the number of pairs scanned, consists in considering only far-apart pairs, that is, pairs such that no neighbor of one node is further apart from the other node, see Section \ref{sec:def-notations} for a formal definition. It can be proven that the 4-point condition defining hyperbolicity holds on all 4-tuples if it holds on 4-tuples made up of two such far-apart pairs~\cite{Nogues2009,SotoGomez2011}. The main bottleneck of this method lies in its inherent quadratic space usage: all far-apart pairs and all-pair distances are stored in the current implementations of this approach.
Computing the hyperbolicity of practical graphs with millions of nodes thus remains a great challenge.

\subsection{Our approach}
To make progress towards this challenge, we propose to enumerate far-apart pairs by decreasing distance without computing all-pair distances. The key of our approach is to first compute all eccentricities, that is, for each node, what is the largest distance from it. Computing all eccentricities is feasible in practice, see Section \ref{sec:related_work}. Note that we obtain the diameter $D$ of the graph as a side product, as it simply is the maximum eccentricity. We then scan nodes with eccentricity $D$ and enumerate far-apart nodes at distance $D$ from them, that is, nodes at distance $D$ that form far-apart pairs with them. We then scan nodes with eccentricity at least $D-1$ and enumerate far-apart nodes at distance $D-1$ from them, and so on. We also include various optimizations proposed in~\cite{CohenCL15,BorassiCCM15} to further reduce the number of 4-tuples considered. The main difficulty of our approach is that we have to compute distances on the fly as we do not store all-pair distances. This mainly requires to perform a breadth-first search (BFS) for each node of a far-apart pair considered. Interestingly, we show how to prune these BFS searches  based on some of the optimizations proposed in~\cite{BorassiCCM15}.  Storing most recent BFS searches in a cache also increases performance for some instances as we observe some sharing of far-apart nodes in practice.

\subsection{Main contributions}

Our main contributions are the following.
\begin{itemize}
	\item We present the first non-naive algorithm for iterating over far-apart pairs that neither computes and stores all distances explicitly, nor sorts the node pairs by recomputing all distances from scratch whenever they are needed.
	\item This, for the first time, enables enumerating all far-apart pairs of large graphs. Previously this was not possible either due to excessive amounts of time or memory needed for the computation of all far-apart pairs.
	\item As the prime application of our algorithm for iterating over far-apart pairs, we significantly reduce the memory consumption when computing the graph hyperbolicity. The memory reduction is at least two orders of magnitude for some instances.
	\item This drastic memory reduction enables us to compute the hyperbolicity of many large graphs for the first time.
	\item Due to the significance of far-apart pairs in a graph (e.g., they are the defining vertices for radius, eccentricities, diameter, \dots), we believe that our contribution of a far-apart pair iterator is also relevant in other settings.
\end{itemize}

\subsection{Other related work} \label{sec:related_work}

The most advanced practical algorithm for computing hyperbolicity~\cite{BorassiCCM15} can be seen as a refinement of~\cite{CohenCL15} that further prunes the search space. A complementary approach proposed in~\cite{CohenCDL17} consists in splitting the graph further than biconnected components using clique decomposition.
Using such a decomposition could also be used in our framework to further reduce memory usage.

Practical algorithms for computing all eccentricities~\cite{TakesK13, DraganHV18, LiQQ+2018} came along in a line of research for improving diameter computation of real world graphs~\cite{AkibaIK15,CrescenziGHLM13,BorassiCHKMT15,DraganHV18}.

Several optimizations were proposed for BFS search. Most notably, \cite{Hagerup19} reduces space usage to $O(n)$ bits, and \cite{AkibaIY13} performs several BFS in parallel from a node and some of its neighbors using bit-parallel word operations. Both methods could be used to further improve our approach.

\subsection{Organization}
Definitions and notations used in this paper are introduced in \Cref{sec:def-notations}.
We then present our far-apart pair iterator in \Cref{sec:iterator}. \Cref{sec:hyperbolicity} is devoted to hyperbolicity. We recall its definition and review the best know algorithmic results on this parameter. We then present our memory efficient algorithm based on the proposed far-apart pair iterator.
In \Cref{sec:experimental}, we report on the experimental evaluation of the algorithms presented in this paper on various graphs. In particular, we compare our algorithm for computing hyperbolicity with the state-of-the-art algorithm~\cite{BorassiCCM15}. We conclude this paper in \Cref{sec:conclusion} with some directions for future research.

\section{Definitions and notations}
\label{sec:def-notations}

We use the graph terminology of~\cite{Bondy1976,Diestel1997}.
All graphs considered in this paper are finite, undirected, connected, unweighted and simple.
The graph $G=(V,E)$ has $n = |V|$ vertices and $m = |E|$ edges.
The open neighborhood $N_G(S)$ of a set $S \subseteq V$ consists of all vertices in $V \setminus S$ with at least one neighbor in $S$.

Given two vertices $u$ and $v$, a \emph{$uv$-path} of length $\ell \geq 0$ is a sequence of vertices $(u=v_0v_1\ldots v_{\ell}=v)$, such that $\{ v_i, v_{i+1} \}$ is an edge for every $i$.
In particular, a graph $G$ is \emph{connected} if there exists a $uv$-path for all pairs $u,v \in V$, and in such a case the \emph{distance} $\dd_G(u,v)$ is defined as the minimum length of a $uv$-path in $G$.
When $G$ is clear from the context, we write $\dd$ (resp. $N$) instead of $\dd_G$ (resp. $N_G$).
The \emph{eccentricity} $\ecc(u)$ of a vertex $u$ is the maximum distance between $u$ and any other vertex $v\in V$, i.e., $\ecc(u) = \max_{v\in V} \dd(u,v)$. The maximum eccentricity is the \emph{diameter} $\diam(G)$ and the minimum eccentricity is the \emph{radius} $\radius(G)$.

The notion of \emph{far-apart} pairs of vertices has been introduced in~\cite{SotoGomez2011,Nogues2009} to reduce the number of 4-tuples to consider in the computation of the hyperbolicity (see \Cref{sec:hyperbolicity}).  Roughly, we say that two vertices $u,v\in V$ are \emph{far-apart} if for all $w \in V$ neither
$u$ lies on a shortest path from $w$ to $v$, nor $v$ lies on a shortest path from $u$ to $w$. More formally, we have:

\begin{definition}\label{def:far}
In a graph $G = (V, E)$, vertex $u$ is \emph{far} from vertex $v$, or $v$-far, if for any neighbor $w$ of $u$, we have $\dd(v, w) \leq \dd(v,u)$. The pair $u,v$ of vertices is \emph{far-apart} if $u$ is $v$-far and $v$ is $u$-far.
\end{definition}

The number of far-apart pairs in a graph can be orders of magnitude smaller than the total number of pairs. For instance, a $p \times q$ grid has only 2 far-apart pairs. On the other hand, all pairs in a clique graph are far-apart.

The set of all far-apart pairs can be determined in time $\OO(nm)$ in unweighted graphs through breadth-first search (BFS), and the interested reader will find in \Cref{sec:tradeoffs} a discussion on several time and space complexity trade-offs for determining far-apart pairs.
We now present some interesting properties of far vertices.

\begin{lemma}\label{lem:far}
For $v\in V$, vertex $u\in V$ is $v$-far if and only if $u$ is a leaf of all shortest path trees rooted at $v$.
\end{lemma}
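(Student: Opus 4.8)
The plan is to prove both directions of the equivalence by relating the local condition in \Cref{def:far} to the global structure of shortest path trees rooted at $v$. Recall that a shortest path tree $T$ rooted at $v$ is a spanning tree in which the unique $v$--$x$ path has length $\dd(v,x)$ for every $x\in V$; equivalently, $x$'s parent in $T$ is some neighbor $w$ of $x$ with $\dd(v,w)=\dd(v,x)-1$, and every such choice of parents yields a valid shortest path tree. A vertex $u$ is a leaf of $T$ iff no vertex has $u$ as its parent in $T$.

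First I would prove the contrapositive of the ``only if'' direction: suppose $u$ is \emph{not} $v$-far, so there is a neighbor $w$ of $u$ with $\dd(v,w) > \dd(v,u)$; since $w$ and $u$ are adjacent, $\dd(v,w) \le \dd(v,u)+1$, hence $\dd(v,w) = \dd(v,u)+1$. Then $u$ is a legitimate parent choice for $w$ in a shortest path tree: build any shortest path tree $T$ rooted at $v$ in which $w$'s parent is $u$ (extend to a full shortest path tree by choosing arbitrary valid parents for the remaining vertices). In $T$, vertex $u$ has a child, so $u$ is not a leaf of all shortest path trees rooted at $v$. This establishes that if $u$ is a leaf of all shortest path trees then $u$ is $v$-far.

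For the converse, suppose $u$ is $v$-far, i.e.\ $\dd(v,w)\le \dd(v,u)$ for every neighbor $w$ of $u$, and let $T$ be an arbitrary shortest path tree rooted at $v$. If $u$ had a child $x$ in $T$, then $x$ is a neighbor of $u$ with $\dd(v,x) = \dd(v,u)+1 > \dd(v,u)$, contradicting $v$-farness. Hence $u$ has no child in $T$, so $u$ is a leaf of $T$; since $T$ was arbitrary, $u$ is a leaf of all shortest path trees rooted at $v$. (The only minor edge case is $u=v$, where $v$ is the root; if $n\ge 2$ the root is never a leaf, but it is also not $v$-far since it has a neighbor $w$ with $\dd(v,w)=1>0$, so the equivalence holds vacuously on both sides; for $n=1$ the statement is trivial.)

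The argument is essentially a matching of the two characterizations through the parent relation, so there is no real obstacle; the only point requiring a little care is the ``only if'' direction, where one must explicitly exhibit a shortest path tree witnessing that $u$ is not a leaf — i.e.\ argue that fixing a single valid parent edge $w\to u$ can always be completed to a full shortest path tree rooted at $v$. This follows immediately from the fact that shortest path trees are exactly obtained by independently assigning to each non-root vertex any neighbor one step closer to $v$.
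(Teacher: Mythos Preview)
Your proof is correct and matches the paper's approach: both directions rest on the observation that a child of $u$ in a shortest path tree is precisely a neighbor at distance $\dd(v,u)+1$, and the paper likewise exhibits (by modifying an arbitrary shortest path tree so that $u$ becomes the parent of $w$) a tree where $u$ is not a leaf when $u$ is not $v$-far. One cosmetic slip: you label your first paragraph the contrapositive of the ``only if'' direction, but since you assume $u$ is not $v$-far and conclude that $u$ fails to be a leaf in some tree, it is actually the contrapositive of the ``if'' direction---the mathematics and the stated conclusion are unaffected.
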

\begin{proof}
Clearly, a non-leaf vertex $u$ of a shortest path tree rooted at $v$ has a neighbor at distance $\dd(v,u)+1$ and cannot be $v$-far. Reciprocally, if $u$ is not $v$-far, it has a neighbor $w$ satisfying $\dd(v,w)=\dd(v,u)+1$. Modifying any shortest path tree by setting $u$ as the parent of $w$ yields a valid shortest path tree where $u$ is not a leaf. 
\end{proof}

From \Cref{lem:far}, we also get the following immediate results.

\begin{corollary} \label{cor:ecc}
For each $u\in V$, any $v\in V$ such that $\dd(u, v) = \ecc(u)$ is $u$-far.
\end{corollary}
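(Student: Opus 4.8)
The claim is an immediate consequence of \Cref{lem:far} together with the fact that an eccentricity-realizing vertex sits at the "outermost layer" of the BFS from $u$. The plan is to fix $u \in V$ and a vertex $v$ with $\dd(u,v) = \ecc(u)$, and to argue that $v$ must be a leaf of every shortest path tree rooted at $u$; \Cref{lem:far} then yields that $v$ is $u$-far, which is exactly the assertion.

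\textbf{Key step.} Suppose, towards a contradiction, that $v$ is \emph{not} a leaf of some shortest path tree $T$ rooted at $u$. Then $v$ has a child $w$ in $T$, and by the defining property of shortest path trees $\dd(u,w) = \dd(u,v) + 1$. But $\dd(u,v) = \ecc(u) = \max_{x \in V} \dd(u,x)$, so $\dd(u,w) = \ecc(u)+1 > \ecc(u)$, contradicting the maximality of $\ecc(u)$. Hence $v$ is a leaf of every shortest path tree rooted at $u$, and \Cref{lem:far} (applied with the roles of the two vertices exchanged, i.e., ``$v$ is $u$-far iff $v$ is a leaf of all shortest path trees rooted at $u$'') gives that $v$ is $u$-far. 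Alternatively, one can bypass \Cref{lem:far} and invoke \Cref{def:far} directly: for any neighbor $w$ of $v$ we have $\dd(u,w) \le \ecc(u) = \dd(u,v)$, which is precisely the condition for $v$ to be $u$-far.

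\textbf{Obstacle.} There is essentially no obstacle here: the only thing to be slightly careful about is the direction in which \Cref{lem:far} is quantified (it is stated for ``$u$ is $v$-far'', so one must instantiate it with the vertices in the right order), and the trivial observation that every neighbor of any vertex is at distance at most $\ecc(u)$ from $u$. No further work is needed.
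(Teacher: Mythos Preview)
Your proposal is correct and matches the paper's approach: the paper simply states that \Cref{cor:ecc} is an immediate consequence of \Cref{lem:far} without spelling out the argument, and your write-up provides exactly the one-line justification (a child of $v$ in a shortest path tree rooted at $u$ would sit at distance $\ecc(u)+1$). Your remark about instantiating \Cref{lem:far} with the roles of the two vertices swapped is the only point requiring care, and you handle it correctly.
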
 

\begin{corollary} \label{cor:diameterisfarapart}
For any $u, v \in V$, if $\dd(u,v) = \diam(G)$, then the pair $(u, v)$ is far-apart.
\end{corollary}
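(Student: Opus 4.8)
The plan is to derive Corollary~\ref{cor:diameterisfarapart} as a near-immediate consequence of Corollary~\ref{cor:ecc}, which in turn rests on Lemma~\ref{lem:far}. The statement is symmetric in $u$ and $v$, so by the definition of far-apart (Definition~\ref{def:far}) it suffices to show that $\dd(u,v) = \diam(G)$ implies both that $v$ is $u$-far and that $u$ is $v$-far; the two assertions follow from each other by swapping the roles of $u$ and $v$.

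First I would observe that $\diam(G) = \max_{w \in V} \ecc(w)$, and since $\ecc(u) = \max_{w \in V}\dd(u,w) \geq \dd(u,v) = \diam(G)$ while also $\ecc(u) \leq \diam(G)$ by definition of the diameter, we get $\ecc(u) = \diam(G) = \dd(u,v)$. Thus $v$ is a vertex realizing the eccentricity of $u$, i.e.\ $\dd(u,v) = \ecc(u)$, and Corollary~\ref{cor:ecc} applied with this $u$ gives that $v$ is $u$-far. By the identical argument with $u$ and $v$ interchanged, $\ecc(v) = \diam(G) = \dd(v,u)$, so $u$ is $v$-far. Having shown $v$ is $u$-far and $u$ is $v$-far, the pair $(u,v)$ is far-apart by definition.

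There is essentially no obstacle here: the only thing to be careful about is the small chain of inequalities establishing $\ecc(u) = \diam(G)$, which hinges on the two-sided bound $\dd(u,v) \leq \ecc(u) \leq \diam(G)$ together with the hypothesis $\dd(u,v) = \diam(G)$ forcing equality throughout. Everything else is a direct invocation of Corollary~\ref{cor:ecc} (hence ultimately of Lemma~\ref{lem:far}) and the symmetry of the far-apart relation. If one preferred to bypass Corollary~\ref{cor:ecc} entirely, one could argue directly from Lemma~\ref{lem:far}: any neighbor $w$ of $u$ satisfies $\dd(v,w) \leq \dd(v,u)+1$, but $\dd(v,w) \leq \diam(G) = \dd(v,u)$, so $u$ has no neighbor strictly farther from $v$, i.e.\ $u$ is $v$-far (and symmetrically for $v$). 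I would keep the short version that cites Corollary~\ref{cor:ecc}.
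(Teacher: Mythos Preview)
Your proposal is correct and matches the paper's approach: the paper simply notes that Corollary~\ref{cor:diameterisfarapart} is an immediate consequence of Lemma~\ref{lem:far} (via Corollary~\ref{cor:ecc}), and you have spelled out exactly that argument.
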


In particular, \Cref{cor:ecc} implies that for any vertex $u$, the set of $u$-far vertices is non-empty.

Interestingly, knowing the far vertices of a node $u$ and their distance to $u$, it is possible to scan distant nodes in a sort of backward BFS
as described in \Cref{sec:distfromfarvertices}.

\section{Iterator over far-apart pairs} \label{sec:iterator}

In \Cref{sec:tradeoffs} we present several algorithmic choices that result in different time and space complexity trade-offs for determining the set of far-apart pairs.
Here, we engineer a data structure and algorithms to determine the set of far-apart pairs and return these pairs sorted by decreasing distances. Our objective is to provide an iterator that determines the next pair to yield on the fly, that postpones computations as much as possible, and with an acceptable memory consumption.

\begin{description}
\item[Data structure.]
To store and organize data, we use an array $F$ indexed by distances in range from 1 to $\diam(G)$, so that cell $F^d$ contains data related to far-apart pairs at distance~$d$.
More precisely, $F^d$ is a hash map associating to a vertex $u \in V$ the subset $F_u^d$ of $u$-far vertices at distance $d$ from $u$. The subset $F_u^d$ can be implemented using either a set data structure allowing to answer a membership query in $\OO(1)$ time, or an array of size $|F_u^d|$ whose elements are sorted according to any total ordering of the vertices to enable membership queries in time $\OO(\log_2{|F_u^d|})$.

We assume, as it is the case for most modern programming languages, that it is possible to visit the elements of a hash map in a fixed arbitrary order (e.g., insertion order). We have the same assumption for set data structures.

\item[Initialization.]
Recall that our aim is to iterate over far-apart pairs by non increasing distances and to postpone computation as much as possible. To this end, we initially store in $F$, for each vertex $u$ with eccentricity $\ecc(u)$, an empty set of $u$-far vertices in $F^d$ with $d={\ecc(u)}$. This empty set will serve as an indicator to trigger the effective computation of the set of $u$-far vertices the first time it is needed (recall that by \Cref{cor:ecc}, this set is non-empty).

Clearly, this initialization procedure requires the knowledge of the eccentricities of all vertices. Fortunately, although the determination of the eccentricities has worst-case time complexity in $O(nm)$, practically efficient algorithms have been proposed to perform this task~\cite{DraganHV18,TakesK13}. These algorithms maintain upper and lower bounds on the eccentricity of each vertex and improve these bounds by computing distances from a few well chosen vertices until all gaps are closed. In practice, these algorithms are orders of magnitude faster than a naive algorithm performing a BFS from each vertex.

\item[Filling.]
Each time we compute distances from a vertex $u$ that has not been considered before, the corresponding sets $F_u^d$ for all $d$ can be inserted in $F$. This is done in particular while computing the eccentricities during the initialization of the data structure. 
 
\item[Next.]
We define a function $\texttt{next}(F)$ that yields the next far-apart pair in the ordering. 
Observe that, since $F$ is used to iterate over the far-apart pairs by non increasing distances, when starting to consider far-apart pairs at distance $d$, we know that all pairs at distance $d'> d$ have already been considered and that at initialization, we have added in $F^d$ an empty set for each vertex with eccentricity $d$. Hence, all vertices involved in a far-apart pair at distance $d$ are known. 

Therefore, we can define a function $\texttt{next}(F^d)$ that returns either the next item $(u, F_u^d)$ in the fixed arbitrary ordering on the items stored in the hash map $F^d$, or \texttt{Stop} when all items have been considered.
Similarly, we define a function $\texttt{next}(F_{u}^d)$ that returns either the next vertex in the ordering  defined over the vertices in $F_{u}^d$, or \texttt{Stop} when all vertices have been considered.

For a distance $d$ and a vertex $u$, we repeat calls to $\texttt{next}(F_{u}^d)$ until finding a vertex $w$ such that $u$ is $w$-far (by testing if $u$ is in $F_w^d$), in which case the far-apart pair $(u, w)$ at distance $d$ is returned. If no vertex $w$ is found, we go to the next vertex in $F^d$ through a call to $\texttt{next}(F^d)$.
When all vertices in $F^d$ have been considered, we start considering far-apart pairs at distance $d-1$.

The use of a total ordering on the vertices allows to ensure that a far-apart pair $(u, v)$ is returned only once, i.e., when $u < v$.
During these operations, if we encounter an empty set $F_x^d$, we know from the initialization step that $x$ has never been considered before and that $\ecc(x) = d$. So, we compute distances from $x$ and store the $x$-far vertices in $F$ via the filling step.

\item[Improvements.] Depending on the usage of this data structure, some simple improvements can be done, such as:
\begin{enumerate}

    \item When iterating over the vertices in $F_u^d$, each time a vertex $w\in F_u^d$ is found such that $u$ is not $w$-far, we can remove $w$ from $F_u^d$. This avoids checking twice if a pair is far-apart, and,  at the end of the iterations on $F_u^d$, this set will contain a vertex $w$ only if $u$ is $w$-far.
    Actually, instead of removing elements from $F_u^d$, it is safer to insert the vertices $w$ such that $u$ is $w$-far into a temporary array (or set) $T$, and to replace $F_u^d$ by $T$ when $\texttt{next}(F_{u}^d)$ returns \texttt{Stop}. \label{improvement:1}
    We will use this improvement in our algorithm for computing hyperbolicity.
    
    \item When all vertices in $F^d$ have been considered, one can delete $F^d$ to reduce the memory consumption if appropriate with the usage. Similarly, one can avoid storing $F^d$ if only far-apart pairs at distance $d'>d$ are requested, as is the case for instance to enumerate the diameters of a graph or when computing hyperbolicity when a lower-bound has been found. \label{improvement:2}
\end{enumerate}

\end{description}

The time complexity for iterating over all far-apart pairs sorted by non increasing distance is in $\OO(nm)$ when using sets for $F_u^d$ (resp., $\OO(nm + n^2\log{n})$ when using arrays). Indeed, the algorithm computes BFS distances from each vertex of the graph (some during initialization, and others during queries). Furthermore, the query time to report all far-apart pairs involving a vertex $u$ is in $\OO(n)$ since $|\cup_{d=1}^{\ecc(u)} F_u^d| = \OO(n)$ and checking if $u\in F_w^d$ requires time $\OO(1)$ (resp., $\OO(\log{n}))$. The sorting operation over the far-apart pairs is implicit and thus adds no extra cost. The space complexity of the data structure is in $\OO(n^2)$. However, we will see in \Cref{sec:experimental} that it is much smaller in practice.

\section{Use Case: Gromov Hyperbolicity} \label{sec:hyperbolicity}
We now present an interesting use case for our far-apart pair iterator with the computation of the Gromov hyperbolicity of a graph~\cite{Gromov1987}.

\subsection{Definitions}
A metric space $(V,\dd_G)$ is a tree metric if there exists a distance-preserving mapping from $V$ to the nodes of an edge-weighted tree. If so, the graph $G$ is said to be $0$-hyperbolic because it satisfies the 4-points condition bellow with parameter $\delta=0$. Introducing some slack $\delta$ allows to measure how much the metric of a graph deviates from that of a tree. This slack $\delta$ is called the \emph{hyperbolicity} of the graph:

\begin{definition}[\textbf{$4$-points Condition,~\cite{Gromov1987}}]
\label{def:hyperbolic}
  Let $G$ be a connected graph.
  For every $4$-tuple $u,v,x,y$ of vertices of $G$, we define $\delta(u,v,x,y)$ as half of the difference between the two largest sums among
  $S_1=\dd(u,v)+\dd(x,y)$, $S_2 = \dd(u,x)+\dd(v,y)$, and $S_3 = \dd(u,y)+\dd(v,x)$.

  The hyperbolicity of $G$, denoted by $\delta(G)$, is equal to $\max_{u,v,x,y\in V(G)} \delta(u,v,x,y)$.
Moreover, we say that $G$ is \emph{$\delta$-hyperbolic} whenever $\delta \geq \delta(G)$.
\end{definition}

Other characterizations exist for $0$-hyperbolic graphs and yield other definitions for the hyperbolicity $\delta$ of a graph that differ only 
by a small constant factor~\cite{Bermudo2013,LaHarpe1990,Gromov1987}.

From the $4$-points condition above, it is straightforward to compute graph hyperbolicity in $\Theta(n^4)$-time.
In theory, it can be decreased to $\OO(n^{3.69})$ by using a clever $(\max,\min)$-matrix product~\cite{FournierIV15}; however, in practice, the best-known algorithms still run in $\OO(n^4)$-time~\cite{BorassiCCM15,CohenCL15}.
Graphs with small hyperbolicity can be recognized faster. In particular, $0$-hyperbolic graphs coincide with \emph{block graphs}, that are graphs whose biconnected components are complete subgraphs~\cite{Bandelt1986,Howorka1979}. Hence, deciding whether a graph is $0$-hyperbolic can be done in time $\OO(n+m)$.
The latter characterization of $0$-hyperbolic graphs follows from a more general result stating that the hyperbolicity of a graph is the maximum hyperbolicity of its biconnected components (see e.g. \cite{CohenCL15} for a proof). 
More recently, it has been proven that the recognition of $\frac{1}{2}$-hyperbolic graphs is computationally equivalent to deciding whether
there exists a chordless cycle of length $4$ in a graph~\cite{CoudertD14}. The latter problem can be solved in deterministic $\OO(n^{3.26})$-time~\cite{LeGall2012} and in randomized $\OO(n^{2.373})$-time~\cite{VassilevskaWilliams2015} by using fast matrix multiplication.

Several pre-processing methods for reducing the size of the input graph have been proposed. In particular, \cite{SotoGomez2011} proved that the hyperbolicity of $G$ is equal to the maximum of the hyperbolicity of the graphs resulting from both a \emph{modular}~\cite{Gallai1967,Habib2010} or a \emph{split}~\cite{Cunningham1980,Cunningham1982} decomposition of $G$. These decompositions can be computed in linear time~\cite{CharbitMR12}.  
Algorithms with time complexity in $\OO(\mw(G)^{3}\cdot n+m)$ and $\OO(\sw(G)^{3}\cdot n+m)$ when parameterized  respectively by the  \emph{modular width} $\mw(G)$ and the \emph{split width} $\sw(G)$ have been proposed in~\cite{CoudertDP19}.
Moreover, \cite{CohenCDL17} shows how to use modified versions  of the atoms of a decomposition of $G$ by clique-minimal separators~\cite{Tarjan1985,Berry2010b,CoudertD18}. This decomposition can be obtained in time $\OO(nm)$.

\subsection{Previous algorithm}
In this section, we recall the algorithm proposed in~\cite{BorassiCCM15} for computing  hyperbolicity. This algorithm improves upon the algorithm proposed in~\cite{CohenCL15} by adding pruning techniques to further reduce the number of 4-tuples to consider.
To the best of our knowledge, the algorithms proposed in~\cite{BorassiCCM15,CohenCL15} are the only algorithms enabling to compute the exact hyperbolicity of graphs with up to $50000$ nodes. These algorithms are based on the following lemmas.

\begin{lemma}[\cite{SotoGomez2011,CohenCL15}]\label{lem:certificate}
  Let $G$ be a connected graph.  There exist two far-apart pairs $(u,v)$ and $(x,y)$ satisfying $\delta(u,v,x,y) = \delta(G)$.
\end{lemma}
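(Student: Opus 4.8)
The plan is to start from a $4$-tuple $u,v,x,y$ achieving $\delta(u,v,x,y)=\delta(G)$ (one exists since $\delta(G)$ is defined as a maximum over a finite set) and then argue that, without decreasing the value of $\delta$, we may successively replace each vertex by a ``far enough'' vertex so that the two relevant pairs become far-apart. The key observation is the standard fact about $\delta(u,v,x,y)$: once we fix which of the three sums $S_1,S_2,S_3$ is the largest and which is the second largest, $\delta$ is an affine function of the pairwise distances, with each distance appearing with coefficient $+\tfrac12$, $-\tfrac12$, or $0$. Concretely, if the two largest sums are $S_1=\dd(u,v)+\dd(x,y)$ and $S_2=\dd(u,x)+\dd(v,y)$, then $\delta(u,v,x,y)=\tfrac12\bigl(\dd(u,v)+\dd(x,y)-\dd(u,x)-\dd(v,y)\bigr)$, so $\dd(u,v)$ and $\dd(x,y)$ enter with positive sign.

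First I would show that we may take $(u,v)$ to be a far-apart pair. Suppose $u$ is not $v$-far; then $u$ has a neighbour $w$ with $\dd(v,w)=\dd(v,u)+1$. I want to replace $u$ by $w$. The effect on $\delta$ is controlled by how the four distances from $u$ change when moving to $w$: each of $\dd(w,v),\dd(w,x),\dd(w,y)$ differs from the corresponding distance from $u$ by at most $1$ (triangle inequality, since $w\sim u$), and $\dd(w,v)=\dd(u,v)+1$ exactly. The claim is that among all neighbours $w$ of $u$ with $\dd(v,w)=\dd(v,u)+1$, and over all neighbours reachable by iterating this, one can reach a $v$-far vertex $u'$ with $\dd(v,u')\ge\dd(v,u)$ while $\delta(u',v,x,y)\ge\delta(u,v,x,y)$; and symmetrically for $v$, $x$, $y$. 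The cleanest way to organise this is: pick $u'$ to be a vertex maximizing $\dd(v,\cdot)$ among vertices on a shortest $v$-to-(far vertex) extension through $u$ — equivalently, extend $u$ within a shortest-path tree rooted at $v$ down to a leaf, which by \Cref{lem:far} is $v$-far. Along such an extension $\dd(v,\cdot)$ strictly increases by $1$ at each step, so its total increase is $\Delta\ge 0$; each of the other two distances changes by at most $\Delta$ in absolute value; hence in the active affine formula the gain $+\tfrac12\Delta$ from $\dd(u,v)$ (when the $uv$-pair sum is largest) dominates, and $\delta$ does not decrease. One must also re-examine whether the identity of the two largest sums can change along the way, but since $\delta$ equals the max over the three labellings of the corresponding affine expression, and each such expression is continuous (indeed $1$-Lipschitz in each distance), replacing the maximizing tuple by a new tuple with a not-smaller value of the previously-active expression still yields $\delta(u',v,x,y)\ge\delta(u,v,x,y)$, and by maximality of $\delta(G)$ we get equality. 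Then I would iterate: make $v$ a $u'$-far vertex by the same move (extending in a shortest-path tree rooted at $u'$), which cannot destroy $u'$ being $v$-far because "being $v$-far" only depends on the pair and the move on $v$-side keeps $\dd(u',v)$ non-decreasing; then do the same for $x$ and $y$.

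The main obstacle I anticipate is precisely this interaction: when I push $v$ to make $(u',v)$ far-apart, I must be sure I do not break the far-apartness of $u'$ relative to $v$ that I just established, and similarly the pair $(x,y)$ must survive the adjustments to $u,v$ and vice versa. The two pairs are "distance-decoupled" in the sense that the move replacing $u$ by a neighbour only ever increases $\dd(u,v)$, never decreases any of the six distances by more than the increase it produces on the controlled side — but writing this monotonicity down carefully for all three sum-labellings is where the real work lies. A convenient reduction is to prove a one-sided lemma first: \emph{if $\delta(u,v,x,y)=\delta(G)$ and $u$ is not $v$-far, then there is a $v$-far vertex $u'$ with $\delta(u',v,x,y)=\delta(G)$ and $\dd(u',v)\ge \dd(u,v)$, with $\dd(u',x),\dd(u',y)$ each within $\dd(u',v)-\dd(u,v)$ of the old values}; apply it four times (to $u$, then $v$, then $x$, then $y$), checking after each application that the pair not being modified is untouched and the pair being modified only improves. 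I expect the bookkeeping of the affine expressions — rather than any deep idea — to be the bulk of the argument; this matches the fact that the statement is attributed as folklore to \cite{SotoGomez2011,CohenCL15}.
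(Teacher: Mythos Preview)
The paper does not prove this lemma; it only cites \cite{SotoGomez2011,CohenCL15}. So there is no in-paper proof to compare against, and I evaluate your proposal on its own merits.

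Your iterative-replacement approach can be made to work, but the justification you give for the key preservation step is not valid as written. After replacing $v$ by $v'$ (a leaf of a shortest-path tree rooted at $u'$), you need $u'$ to be $v'$-far, not $v$-far; the phrase ``being $v$-far only depends on the pair and the move on the $v$-side keeps $\dd(u',v)$ non-decreasing'' does not address this, because $v$ itself has changed and far-ness is not a monotone function of distance. Fortunately the claim is true for a different reason. If $v'$ is reached from $v$ along a path of length $k$ in which $\dd(u',\cdot)$ increases by $1$ at each step, then $\dd(v,v')=k$ exactly: the path gives $\dd(v,v')\le k$, while the triangle inequality gives $\dd(v,v')\ge \dd(u',v')-\dd(u',v)=k$. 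Hence for every neighbour $w$ of $u'$,
\[
\dd(v',w)\le \dd(v',v)+\dd(v,w)\le k+\dd(v,u')=\dd(v',u'),
\]
using that $u'$ was $v$-far. So $u'$ is indeed $v'$-far, and your four-step iteration goes through (the $(x,y)$ side is untouched when you work on $(u,v)$, and vice versa). You should also note, as you already computed, that each push keeps $S_1$ the largest sum, so the hypothesis needed for the next push is preserved.

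A cleaner organisation---and the standard one in the cited references---avoids the iteration and the preservation bookkeeping entirely: among all $4$-tuples realising $\delta(G)$, choose one maximising the largest of the three sums, and label it so that $S_1=\dd(u,v)+\dd(x,y)$ is that largest sum. If $u$ were not $v$-far, your own one-step computation shows that moving $u$ to a neighbour $u'$ with $\dd(v,u')=\dd(v,u)+1$ keeps $\delta=\delta(G)$, keeps $S_1$ largest, and strictly increases $S_1$, contradicting maximality. By symmetry all four far conditions hold simultaneously, so both $(u,v)$ and $(x,y)$ are far-apart. This extremal choice replaces your four sequential pushes by a single contradiction argument.
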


\begin{lemma}[\cite{CohenCL15}]\label{lem:shape}
 For every $4$-tuple $u,v,x,y$ of vertices of a connected graph $G$, we have $\delta(u,v,x,y)\leq \min_{a,b\in\set{u,v,x,y}}\dd(a,b)$. Furthermore, if $S_1=\dd(u,v)+\dd(x,y)$ is the largest of the sums defined in \Cref{def:hyperbolic} (which can be assumed w.l.o.g.), we have $\delta(u,v,x,y)\leq \frac{1}{2}\min\set{\dd(u,v),\dd(x,y)}$.
\end{lemma}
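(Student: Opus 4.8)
The plan is to work directly with the three matching sums $S_1,S_2,S_3$. Since the unordered set $\set{S_1,S_2,S_3}$ is exactly the set of the three perfect-matching sums on $\set{u,v,x,y}$, any permutation of the four vertices merely permutes these sums, so $\delta(u,v,x,y)$ depends only on the unordered quadruple. Hence I may fix, once and for all, a canonical labelling in which $S_1=\dd(u,v)+\dd(x,y)$ is the largest and $S_2=\dd(u,x)+\dd(v,y)\geq S_3=\dd(u,y)+\dd(v,x)$, so that by definition $2\delta(u,v,x,y)=S_1-S_2$. Under this labelling the six pairwise distances split naturally into the three matchings $\set{uv,xy}$, $\set{ux,vy}$, $\set{uy,vx}$, and I will bound $2\delta$ by (twice) each of the six distances, dealing with one matching at a time.

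For the ``Furthermore'' part I first show $2S_1-S_2-S_3\leq 2\dd(u,v)$ and $2S_1-S_2-S_3\leq 2\dd(x,y)$. Writing $2S_1-S_2-S_3=2\dd(u,v)+2\dd(x,y)-\dd(u,x)-\dd(v,y)-\dd(u,y)-\dd(v,x)$, the first bound follows by grouping the four ``crossed'' distances as $[\dd(u,x)+\dd(u,y)]+[\dd(v,x)+\dd(v,y)]\geq 2\dd(x,y)$ (two triangle inequalities through $u$ and through $v$), and the second by grouping them as $[\dd(u,x)+\dd(v,x)]+[\dd(u,y)+\dd(v,y)]\geq 2\dd(u,v)$. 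Then, since $S_2\geq S_3$ gives $S_2\geq\tfrac12(S_2+S_3)$, I get $2\delta=S_1-S_2\leq S_1-\tfrac12(S_2+S_3)=\tfrac12(2S_1-S_2-S_3)\leq\min\set{\dd(u,v),\dd(x,y)}$, which is exactly the claimed inequality.

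For the first inequality $\delta(u,v,x,y)\leq\min_{a,b\in\set{u,v,x,y}}\dd(a,b)$, I bound $2\delta$ by $2\dd(a,b)$ for each of the six pairs. The pairs $\set{u,v}$ and $\set{x,y}$ are already covered (with an extra factor $\tfrac12$) by the ``Furthermore'' part. For the matching-$S_2$ pairs $\set{u,x}$ and $\set{v,y}$, I pass through the larger quantity $S_1-S_3$ (legitimate since $2\delta=S_1-S_2\leq S_1-S_3$): from $\dd(u,v)\leq\dd(u,x)+\dd(x,v)$ and $\dd(x,y)\leq\dd(x,u)+\dd(u,y)$ one gets $S_1\leq 2\dd(u,x)+S_3$, and symmetrically $S_1\leq 2\dd(v,y)+S_3$. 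For the matching-$S_3$ pairs $\set{u,y}$ and $\set{v,x}$, I bound $S_1-S_2$ directly: cancelling $\dd(u,x)$ via $\dd(x,y)\leq\dd(x,u)+\dd(u,y)$ and then using $\dd(u,v)\leq\dd(u,y)+\dd(y,v)$ gives $S_1-S_2\leq 2\dd(u,y)$, and the symmetric computation gives $S_1-S_2\leq 2\dd(v,x)$. Combining all cases yields $2\delta\leq 2\dd(a,b)$ for every pair, hence the bound.

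The argument is pure bookkeeping with the triangle inequality, so I do not anticipate a real obstacle. The only points requiring care are fixing the canonical labelling before starting and keeping straight whether one is bounding $S_1-S_2$ (which equals $2\delta$) or the sometimes more convenient larger quantity $S_1-S_3$; conflating the two, or forgetting that $S_2$ and $S_3$ are themselves interchangeable only after relabelling, is the easiest way to slip.
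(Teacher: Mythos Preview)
Your proof is correct. For the first inequality $\delta(u,v,x,y)\le\min_{a,b}\dd(a,b)$ you use exactly the paper's approach---bounding $S_1-S_2$ (or $S_1-S_3$) by $2\dd(a,b)$ via a single triangle inequality---only you spell out all six pairs where the paper does two and says ``similarly''.

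For the ``Furthermore'' part your route diverges from the paper's sketch. The paper only records $S_1-S_2\le 2\dd(x,y)$ and $S_1-S_2\le 2\dd(u,v)$, which yields $\delta\le\min\set{\dd(u,v),\dd(x,y)}$ but not the sharper factor-$\tfrac12$ bound actually stated in the lemma. Your argument via $2S_1-S_2-S_3$---bounding the four crossed distances from below by two triangle inequalities and then using $S_2\ge\tfrac12(S_2+S_3)$ to pass from $S_1-\tfrac12(S_2+S_3)$ back to $S_1-S_2$---is precisely what is needed to recover that extra factor. This is the key additional idea, and it is carried out cleanly; the paper presumably defers it to the cited reference~\cite{CohenCL15}.
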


For the sake of completeness, we quickly give the proof of Lemma~\ref{lem:shape}. Without loss of generality the second largest sum is $S_2 = d(u,x)+d(v,y)$. By triangle inequality, we have $\dd(u,v)\le \dd(u,x)+\dd(x,y)+\dd(y,v) = S_2 + \dd(x,y)$, yielding $S_1-S_2\le 2\dd(x,y)$. We can similarly obtain $S_1-S_2\le 2\dd(u,v)$.

The key idea of the algorithms of~\cite{BorassiCCM15,CohenCL15} is to visit the most promising 4-tuples first, that is, those made of pairs of far-apart vertices at largest distance, and to stop computation as soon as the bounds of \Cref{lem:shape} are reached. These algorithms thus need to iterate over far-apart pairs ordered by decreasing distances.

More precisely, \Cref{alg:hyp} bellow (see also \cite{BorassiCCM15}) iterates first over the far-apart pairs sorted by non increasing distances. Then, given the $i$-th far-apart pair $(x_i, y_i)$, it iterates over the previous far-apart pairs $(v, w)$, such that $\dd(v, w) \geq \dd(x_i, y_i)$ in order to consider quadruples $(v,w,x_i,y_i)$ such that $S_1=d(v,w)+d(x_i,y_i)$ is the largest sum. 
Note that such pairs $(v,w)$ have been considered previously and satisfy $(v,w)=(x_j,y_j)$ for some $j<i$. 
This ensures by \Cref{lem:shape} that as soon as $\dd(x_i, y_i) \leq 2\delta_L$, where $\delta_L$ is the current best solution, no further improvements can be done. So, the hyperbolicity $\delta_L$ of the graph is then returned in Line~5.
To iterate over the pairs $(v,w)$ such that $\dd(v, w) \geq \dd(x_i, y_i)$, the algorithm maintains the \emph{mates} of each vertex.
Vertex $w$ is a mate of $v$ if $(v,w)$ is a far-aprt pair satisfying $\dd(v,w)\geq \dd(x_i,y_i)$. In other words, $(v,w)$ is a far-apart pair previously considered for some $j<i$ such that $(x_j, y_j) = (v, w)$ or $(x_j, y_j) = (v, w)$.  

\SetKwFunction{computeAccVal}{computeAccVal}
To further prune the search space, \cite{BorassiCCM15} introduces the notions of \emph{skippable}, \emph{acceptable} and \emph{valuable} vertices that are computed by \computeAccVal according to the definitions given bellow. \Cref{alg:hyp} can be read before considering the details of this optimization.
Its time complexity is in $\OO(n^4)$ and its space complexity is in $\Theta(n^2)$. Indeed, it not only needs to store the list of far-apart pairs, but also the distance matrix, and the lists of mates.

\begin{algorithm}[t]
\SetKwFunction{computeAccVal}{computeAccVal}
\KwIn{$\FF=(\{x_1,y_1\},\dots,\{x_N,y_N\})$, an ordered list of far-apart pairs. }
\KwIn{$\dd$, the distance matrix. }
$\delta_L \gets 0$ \\
\texttt{mates}[$v$]$ \gets \emptyset$ for each $v$ \\
\For{$i \in [1,N]$}{
    \If{$\dd(x_i,y_i) \leq 2\delta_L$}{\Return $\delta_L$}

    (\texttt{acceptable}, \texttt{valuable}) $\gets$ \computeAccVal{} \\
    \For{$v \in $ \upshape\texttt{valuable}} {
        \For{$w \in $ \upshape\texttt{mates}[$v$]} {
            \If{$w \in$ \upshape\texttt{acceptable}} {
                $\delta_L\gets\max\{\delta_L, \delta(x_i, y_i, v, w)\}$
            }
        }
    }
    add $y_i$ to \texttt{mates}[$x_i$] \\
    add $x_i$ to \texttt{mates}[$y_i$]
}
\Return $\delta_L$
\caption{Algorithm for computing the hyperbolicity proposed in~\cite{BorassiCCM15}}
\label{alg:hyp}
\end{algorithm}

\paragraph{Skippable, acceptable and valuable.}
Given a pair $x,y$ of nodes and a lower bound $\delta_L$ on hyperbolicity, 
\cite{BorassiCCM15} proposes a classification of the nodes to prune as those that cannot lead to any improvement of the lower-bound $\delta_L$ known so far. For instance, a node $v$ such that $\min\{\dd(x,v), \dd(y,v)\} \leq \delta_L$ can be skipped, since by \Cref{lem:shape} we then have $\delta(x, y, v, w) \leq \delta_L$ for any $w$. In \Cref{lem:skippable} we summarize the conditions defining \emph{$(x,y,\delta_L)$-skippable} nodes.

\begin{lemma}[\cite{BorassiCCM15}]\label{lem:skippable}
 A node $v$ is $(x,y,\delta_L)$-skippable if it satisfies any of the following conditions:
 \begin{enumerate}
    \item $v$ does not belong to any far-apart pair considered before $(x, y)$ (Lemma~5 in~\cite{BorassiCCM15});
    \item $\min\{\dd(x,v), \dd(y,v)\} \leq \delta_L$ (\Cref{lem:shape});
    \item $2\ecc(v) - \dd(x, v) - \dd(y, v) < 4\delta_L + 2 - \dd(x, y)$ (Lemma~8 in~\cite{BorassiCCM15});
    \item $\ecc(v)+\dd(x,y)-3\delta_L-\frac{3}{2} < \max\{\dd(x,v), \dd(y,v)\}$ (Lemma~9 in~\cite{BorassiCCM15}).
 \end{enumerate}
\end{lemma}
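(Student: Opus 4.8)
The plan is to verify each of the four skippability conditions separately, showing in each case that if a vertex $v$ satisfies the condition, then $\delta(x,y,v,w)\le\delta_L$ for every vertex $w$ that could be paired with $v$ in the enumeration (i.e.\ every $w$ such that $(v,w)$ is a far-apart pair already considered, so $\dd(v,w)\ge\dd(x,y)$ and hence $S_1=\dd(x,y)+\dd(v,w)$ is the largest of the three sums in \Cref{def:hyperbolic}). Conditions~1 and~2 are essentially immediate: condition~1 follows from \Cref{lem:certificate} together with the structure of \Cref{alg:hyp}, since a vertex never appearing in an earlier far-apart pair has an empty \texttt{mates} list and therefore contributes no 4-tuple at all; condition~2 is a direct restatement of the second part of \Cref{lem:shape}, because $\delta(x,y,v,w)\le\frac12\min\{\dd(x,y),\dd(v,w)\}\le\frac12\cdot 2\delta_L\wedge\ldots$ — more simply, $\min\{\dd(x,v),\dd(y,v)\}$ is one of the six pairwise distances, so $\delta(x,y,v,w)\le\min_{a,b}\dd(a,b)\le\delta_L$ by the first part of \Cref{lem:shape}.

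For conditions~3 and~4, the work is to unwind the definition $2\delta(x,y,v,w)=S_1-S_2$ where $S_2=\max\{S_2',S_3'\}$ with $S_2'=\dd(x,v)+\dd(y,w)$ and $S_3'=\dd(x,w)+\dd(y,v)$, using that $S_1=\dd(x,y)+\dd(v,w)$ is largest. For condition~3, I would bound $\dd(v,w)\le\ecc(v)$ and $\dd(v,w)\le\ecc(w)$, and also use $\dd(x,y)\le\dd(x,w)+\dd(w,y)$ — wait, more precisely I want to lower bound $S_2$. Using the triangle inequality $\dd(x,w)\ge\dd(x,v)-\dd(v,w)$ is the wrong direction; instead note $\dd(y,w)\ge\dd(v,w)-\dd(v,y)$ and $\dd(x,w)\ge\dd(v,w)-\dd(v,x)$, so $S_2'+S_3'=\dd(x,v)+\dd(y,v)+\dd(x,w)+\dd(y,w)\ge \dd(x,v)+\dd(y,v)+2\dd(v,w)-\dd(v,x)-\dd(v,y)=2\dd(v,w)$, hence $S_2\ge\dd(v,w)$. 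This alone only gives $\delta\le\frac12\dd(x,y)$. To get the stated bound one must bring in $\ecc(v)$: combine $2S_2\ge S_2'+S_3'\ge 2\dd(x,v)+2\dd(y,v)$ is false in general, so instead I would argue $S_2\ge \dd(x,v)+\dd(y,v)$ directly — indeed $S_2'\ge\dd(x,v)+\dd(y,v)-\dd(x,y)$?\ This is the delicate bookkeeping step, and I expect matching the exact constant $4\delta_L+2-\dd(x,y)$ (with the $+2$ coming from integrality/parity of distances in an unweighted graph) to be the main obstacle: one needs $2\delta(x,y,v,w)\le S_1-S_2\le \dd(x,y)+\dd(v,w)-\dd(x,v)-\dd(y,v)$ together with $\dd(v,w)\le\ecc(v)$ and then invoke $\delta(x,y,v,w)\le\frac12\dd(v,w)$ once more to replace $\dd(v,w)-\ecc(v)$ suitably, finally using that $2\delta$ is an even integer to upgrade a strict inequality $<$ to the conclusion $\le\delta_L$.

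For condition~4, the relevant bound is $2\delta(x,y,v,w)=S_1-S_2\le S_1-S_3'=\dd(x,y)+\dd(v,w)-\dd(x,w)-\dd(y,v)$ and symmetrically $\le\dd(x,y)+\dd(v,w)-\dd(y,w)-\dd(x,v)$; averaging or taking the better of the two and applying $\dd(x,w),\dd(y,w)\ge\dd(v,w)-\dd(\cdot,v)$ is circular, so instead I would use $\dd(v,w)\le\ecc(v)$ to write $2\delta\le\dd(x,y)+\ecc(v)-\max\{\dd(x,v),\dd(y,v)\}-(\text{something})$ and match against $3\delta_L+\frac32$; the factor $3$ suggests applying \Cref{lem:shape} in the sharper form $\delta\le\frac12\dd(v,w)$ a second time so that $\ecc(v)\ge\dd(v,w)\ge 2\delta$, contributing an extra $\delta_L$ beyond the $2\delta_L$ already extracted. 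In all cases the structure is: one inequality from $S_1-S_2$, one or two triangle inequalities, one substitution $\dd(v,w)\le\ecc(v)$, one re-use of \Cref{lem:shape}, and a parity argument to close the $\frac12$ and $\frac32$ gaps. Since these are Lemmas~5, 8, and~9 of~\cite{BorassiCCM15}, for this paper I would simply cite that source and include only the short self-contained arguments for conditions~1 and~2, noting that conditions~3 and~4 follow from the respective lemmas in~\cite{BorassiCCM15} by the triangle-inequality manipulations sketched above.
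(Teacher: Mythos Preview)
The paper does not prove this lemma at all: it is stated as a cited result, with each condition tagged by the corresponding lemma in \cite{BorassiCCM15} (or by \Cref{lem:shape} for condition~2), and no argument is given in the text. Your final recommendation---give the one-line arguments for conditions~1 and~2 and cite \cite{BorassiCCM15} for conditions~3 and~4---therefore matches the paper's approach exactly, and in fact goes slightly beyond it.

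That said, your intermediate sketches for conditions~3 and~4 do not constitute proofs. You repeatedly reach for inequalities, notice they point the wrong way or are circular (``is false in general'', ``is the wrong direction'', ``is circular''), and never actually close the chain to arrive at the stated constants $4\delta_L+2$ and $3\delta_L+\tfrac32$. This is fine given your conclusion to defer to the source, but you should not present these fragments as a proof outline: as written they read like a record of failed attempts rather than a plan that would succeed. If you want to include anything beyond the citation, keep only the clean arguments for conditions~1 and~2 (which are correct) and drop the speculative material for~3 and~4.
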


A node that does not satisfy any condition of \Cref{lem:skippable} is defined as \emph{$(x,y,\delta_L)$-acceptable}, and so it must be considered. This class is further refined in~\cite{BorassiCCM15} with the subset of \emph{$c$-valuable} vertices, where $c$ is any fixed node (a good choice is a node with small eccentricity or centrality) as specified in the following Lemma. 

\begin{lemma}[\cite{BorassiCCM15}]\label{lem:valuable}
Let $c$ be any fixed node. A $(x,y,\delta_L)$-acceptable node $v$ is $c$-valuable if
$2\dd(c, v) - 2\delta_L > \dd(x, v) + \dd(y, v) - \dd(x, y)$.
\end{lemma}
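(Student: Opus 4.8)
The plan is to read Lemma~\ref{lem:valuable} as what it is used for in Algorithm~\ref{alg:hyp}: it records the defining inequality of $c$-valuable nodes, and the substantive claim is that it is sound to let the outer loop (over \texttt{valuable}) range only over $c$-valuable nodes. So the statement I would actually prove is the following pruning guarantee. Let $(x,y)$ be the far-apart pair currently processed, let $(v,w)$ be a far-apart pair with $\dd(v,w)\ge\dd(x,y)$, and set $S_1=\dd(x,y)+\dd(v,w)$, $S_2=\dd(x,v)+\dd(y,w)$, $S_3=\dd(x,w)+\dd(y,v)$, assuming (w.l.o.g., as in Lemma~\ref{lem:shape} and in the discussion preceding Algorithm~\ref{alg:hyp}) that $S_1$ is the largest of the three. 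If $\delta(x,y,v,w)>\delta_L$, then at least one of $v,w$ is $c$-valuable and the other is $(x,y,\delta_L)$-acceptable. Granting this, correctness of the pruned algorithm follows in the usual way: by Lemma~\ref{lem:certificate} there is a quadruple witnessing $\delta(G)$ whose two pairs are far-apart (and, by the push argument behind that lemma, can be chosen so that the sum $S_1$ of the two far-apart pairs is largest); these two pairs are processed in non-increasing order of distance, so the larger one lies in \texttt{mates} of both its endpoints when the smaller one becomes $(x,y)$ (the symmetric maintenance of \texttt{mates} also covers the case of two pairs at equal distance); one endpoint is $c$-valuable, hence iterated in the outer loop, and the other is acceptable, hence admitted in the inner loop, so the quadruple is evaluated; and this happens before the distance-based stopping test can stop the loop, since Lemma~\ref{lem:shape} forces $\dd(x,y)\ge 2\delta(x,y,v,w)=2\delta(G)\ge 2\delta_L$ (and likewise $\dd(v,w)\ge 2\delta_L$), so the test is not yet satisfied for either pair while $\delta_L<\delta(G)$.

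To prove the pruning guarantee I would argue by contraposition: assume neither $v$ nor $w$ is $c$-valuable. First, both are $(x,y,\delta_L)$-acceptable: condition~1 of Lemma~\ref{lem:skippable} fails because each of $v,w$ lies in the far-apart pair $(v,w)$, processed before $(x,y)$; and if $v$ or $w$ met any of conditions~2--4 of Lemma~\ref{lem:skippable}, then (by Lemma~\ref{lem:shape} for condition~2, and by the cited lemmas of~\cite{BorassiCCM15} for conditions~3--4) we would already have $\delta(x,y,v,w)\le\delta_L$, contrary to hypothesis. Hence ``not $c$-valuable'' means exactly that the displayed inequality fails, i.e. $2\dd(c,v)-2\delta_L\le\dd(x,v)+\dd(y,v)-\dd(x,y)$ and the analogous inequality for $w$. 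Adding these, bounding $\dd(c,v)+\dd(c,w)\ge\dd(v,w)$ by the triangle inequality on the left side, and regrouping the right side as $\bigl(\dd(x,v)+\dd(y,w)\bigr)+\bigl(\dd(x,w)+\dd(y,v)\bigr)=S_2+S_3$, one gets $2S_1-4\delta_L\le S_2+S_3$, that is $(S_1-S_2)+(S_1-S_3)\le 4\delta_L$. Since $S_1$ is the largest sum, $2\delta(x,y,v,w)=S_1-\max\{S_2,S_3\}=\min\{S_1-S_2,\,S_1-S_3\}\le\tfrac12\bigl((S_1-S_2)+(S_1-S_3)\bigr)\le 2\delta_L$, so $\delta(x,y,v,w)\le\delta_L$ --- the required contradiction.

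The inequality manipulation is the entire analytic content and is routine; the part that needs care --- and the main obstacle --- is the bookkeeping around it: keeping straight which two of the four vertices play the ``$x,y$'' role and which the ``$v,w$'' role (and that $\delta$ itself does not depend on this labelling while the pruning lemmas do), the symmetric role of \texttt{mates} so that equal-distance pairs are still matched, and the check that the distance-based stopping criterion is never reached before the optimal quadruple is examined. If one also wants the claim that an optimal far-apart quadruple may be taken with $S_1$ largest to be fully self-contained, one reuses the argument of Lemma~\ref{lem:certificate}: repeatedly replace a non-far endpoint $u$ of a pair $\{u,u'\}$ by a neighbour one step farther from $u'$; this raises $\dd(u,u')$ by $1$ and each of $S_2,S_3$ by at most $1$, so $S_1$ stays largest and $\delta$ does not drop, and a one-line check shows it does not undo the ``far'' property already obtained for $u'$; iterating over the four endpoints yields the desired quadruple.
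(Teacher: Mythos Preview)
The paper does not give its own proof of Lemma~\ref{lem:valuable}; as written it is a definition (an $(x,y,\delta_L)$-acceptable node $v$ is \emph{called} $c$-valuable when the displayed inequality holds) together with a citation to~\cite{BorassiCCM15}, where the soundness of using this definition as a pruning rule is established. Your reading of the lemma --- that the real content is the pruning guarantee ``if $\delta(x,y,v,w)>\delta_L$ then at least one of $v,w$ is $c$-valuable'' --- is exactly right, and your proof of that guarantee is correct. The analytic core, namely adding the two failed $c$-valuable inequalities, invoking $\dd(c,v)+\dd(c,w)\ge\dd(v,w)$ on the left, regrouping the right as $S_2+S_3-2\dd(x,y)$, and concluding
\[
2\delta(x,y,v,w)=\min\{S_1-S_2,\,S_1-S_3\}\le\tfrac12\bigl((S_1-S_2)+(S_1-S_3)\bigr)\le 2\delta_L,
\]
is the standard argument from~\cite{BorassiCCM15}. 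The surrounding bookkeeping you supply (both $v,w$ acceptable under the hypothesis $\delta>\delta_L$; symmetric maintenance of \texttt{mates}; the stopping test not firing too early; the push argument that an optimal far-apart witness can be taken with $S_1$ maximal) is more than the present paper spells out but is handled carefully and correctly.
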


In \Cref{alg:hyp}, the 4-tuples considered with far-apart pair $(x_i, y_i)$ are such that $v$ is $c$-valuable,  $w$ is $(x,y,\delta_L)$-acceptable and $(v, w)$ is a far-apart pair seen previously.
Overall, the classification of the nodes is done in overall time $\OO(n)$ for a given pair $(x_i, y_i)$ and lower bound $\delta_L$. The experiments reported in~\cite{BorassiCCM15} show that this classification leads to a significant reduction of the number of considered 4-tuples as well as computation time.

\subsection{Hub labeling}

A main bottleneck of \Cref{alg:hyp} comes from the $\Theta(n^2)$ memory usage. This can be alleviated by using hub labeling~\cite{Delling2014}, a technique that allows to encode distances in a graph. The technique is also called two-hop labeling~\cite{CohenHKZ02}. It appears to give a very efficient space-time tradeoff in practice. We tried to use it as a replacement of the distance matrix in \Cref{alg:hyp}. It perfectly fits in memory with all practical graphs we could test. However, computing all distances from a given vertex is orders of magnitude slower than performing a BFS from that vertex. As the technique appeared as an inefficient way to extend Algorithm~\ref{alg:hyp}, we abandoned it.

\subsection{Our algorithm}

To improve upon \Cref{alg:hyp}, and in particular to reduce the memory usage in practice, we do the following.
\begin{itemize}
    \item We use the far-apart pair iterator presented in \Cref{sec:iterator} to avoid the pre-computation and storage of the list of far-apart pairs. The use of Improvement~\ref{improvement:1} of \Cref{sec:iterator} ensures that at the end of the visit of $F_u^d$, it contains only the vertices $w$ such that $u$ is $w$-far, and so the pair $(v, w)$ is far-apart.
    
    \item We design a function $\texttt{mates}(F, v, d)$ to iterate over the far-apart pairs at distance $d \geq \dd(x, y)$ that involve $v$ and that have previously been reported by $\texttt{next}(F)$.
    
    When $\dd(x,y)< d \leq \diam(G)$, this function simply yields vertices from $F_v^d$ since the order of operations of the algorithm when using Improvement~\ref{improvement:1} ensures that $F_v^d$ contains only the vertices forming far-apart pairs with $v$.
    
    When $d = \dd(x, y)$, we have to ensure that a $v$-far vertex $w$ is yielded if and only if the pair $(v, w)$ has previously been reported by a call to $\texttt{next}(F)$ (so the pair $(v, w)$ is far-apart).
    To do so, we modify the far-apart pairs iterator and its $\texttt{next}(F)$ function as follows.
    We use an extra hash map $T$ (initially empty) associating to a vertex $u$ the subset of $u$-far vertices at distance $d$ from $u$ that have previously been reported by $\texttt{next}(F)$.
    Then, when $\texttt{next}(F)$ yields a far-apart pair $(x, y)$, we store $y$ in $T_x$ and $x$ in $T_y$. 
    This way, when $d = \dd(x, y)$, function $\texttt{mates}(F, v, d)$ simply has to yield vertices from $T_v$.
    Finally, as soon as function $\texttt{next}(F)$ starts reporting pairs at distance $d-1$, we exchange hash maps $T$ and $F^d$ (alternative implementation of Improvement~\ref{improvement:1}), and proceed with a cleared hash map $T$.

    \item Instead of giving the distance matrix as input to the algorithm, we compute for each pair $(x, y)$ the BFS distances from $x$ and $y$ before the call to \texttt{computeAccVal()}. Since the distance $\dd(v, w)$ is obtained while extracting the mates of $v$ from the far-apart pair iterator, we get all the needed distances to compute $\delta(x, y, v, w)$.
    
    Although these repeated computations of BFS distances have no impact on the overall time complexity of the algorithm, which remains in $\OO(n^4)$, they represent a significant computation time in practice. To reduce the impact on the overall computation time, we propose Optimisations 2 (\Cref{sec:cache}) and 3 (\Cref{sec:prunedBFS}) below.
\end{itemize}

\begin{algorithm}[t]
\SetKwFunction{computeAccVal}{computeAccVal}
\SetKwFunction{FA}{F}
\SetKwFunction{hasnext}{has\_next}
\SetKwFunction{getnext}{next}
\SetKwFunction{prunedBFS}{prunedBFS}
\SetKwFunction{lowerBoundHeuristic}{lowerBoundHeuristic}
\SetKwFunction{mates}{mates}
\KwIn{$G = (V, E)$}
Initialize the far-apart pair iterator $F$ \\
$\delta_L \gets \lowerBoundHeuristic(G)$ \\
\While{$\hasnext(F)$}{
    $(x, y) \gets  \getnext(F)$ \tcp*[f]{provides $\dd(x, y)$}\\
    \If{$\dd(x, y) \leq 2\delta_L$}{\Return $\delta_L$} 
    $\dd_x \gets c_{x,y,\delta_L}$-$\prunedBFS(x)$ \label{alg:hyp2:cache1}\\ 
    $\dd_y \gets c_{y,x,\delta_L}$-$\prunedBFS(y)$ \label{alg:hyp2:cache2}\\ 
	(\texttt{acceptable}, \texttt{valuable}) $\gets$ \computeAccVal{} \\
    \For{$v \in $ \upshape\texttt{valuable}} {
        \For(\tcp*[f]{provides $\dd(v, w)$}){$w \in \mates(F, v, d)$} {
            \If{$w \in$ \upshape\texttt{acceptable}} {
                $\delta_L\gets\max\{\delta_L, \delta(x, y, v, w)\}$ 
            }
        }
    }
}
\Return $\delta_L$
\caption{New algorithm for computing the hyperbolicity}
\label{alg:hyp2}
\end{algorithm}

See \Cref{alg:hyp2} for the overall presentation of our algorithm.
In the following, we present some optimisations aiming at reducing the computation time.

\subsubsection{Optimisation 1: Lower bound initialisation}
\label{sec:lower_bound_init}

The  technique of acceptable and valuable nodes becomes more efficient when $\delta_L$ is larger as Inequalities~3 and~4 of \Cref{lem:skippable} and the inequality of \Cref{lem:valuable} become stricter. For that reason, we first use the heuristic described in \cite{CohenCL15} to set an initial value to $\delta_L$. It is referenced as lowerBoundHeuristic in \Cref{alg:hyp2}. 

\subsubsection{Optimisation 2: Cache of BFSs}
\label{sec:cache}

We use a cache of BFSs to avoid recomputing a BFS that has recently been computed. This cache has a bounded capacity (e.g., 1000 BFSs).
Observe that even a cache of 2 BFSs is beneficial as function $\texttt{next}(F)$ reports successively all far-apart pairs at distance $d$ involving a vertex $x$ and such that $x < y$.

\subsubsection{Optimisation 3: Pruned BFS for searching acceptable nodes}
\label{sec:prunedBFS}

When considering a pair $(x,y)$, we perform BFS searches from both $x$ and $y$ to obtain distances from $x$ and $y$ and detecting both acceptable and valuable nodes. \Cref{lem:skippable} allows to restrict both searches as follows.

First, we observe that if a node $v$ satisfies $\ecc(v) - \dd(x,v) < 3\delta_L - \frac{3}{2} - \dd(x,y)$, then Lemma~\ref{lem:skippable} applies and $v$ is $(x,y,\delta_L)$-skippable. A $(x,y,\delta_L)$-acceptable node $v$ must thus satisfy:
\begin{equation}\label{eq:prune}
    \ecc(v) - \dd(x,v) \ge c_{x,y,\delta_L}, 
    \quad\mbox{where}\quad 
    c_{x,y,\delta_L}=3\delta_L - \frac{3}{2} - \dd(x,y).
\end{equation}

We then define the \emph{$c_{x,y,\delta_L}$-pruned BFS search} from $x$ as a BFS search from $x$ that visits only nodes satisfying Equation~\eqref{eq:prune}. More precisely, when visiting a node $u$, we enqueue only neighbors $v$ of $u$ that satisfy Equation~\eqref{eq:prune}. Note that $c_{x,y,\delta_L}$ is constant given $x,y$ and $\delta_L$. We can then safely replace the regular BFS from $x$ by a pruned BFS as stated by the following lemma.

\begin{lemma}\label{lem:prunedBFS}
A $c_{x,y,\delta_L}$-pruned BFS search from $x$ visits all $(x,y,\delta_L)$-acceptable nodes.
\end{lemma}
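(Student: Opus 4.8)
The plan is to argue that the pruning can never disconnect $x$ from an acceptable node. Observe first that a BFS from $x$ that only ever enqueues vertices satisfying Equation~\eqref{eq:prune} is exactly an ordinary BFS from $x$ on the subgraph $G'$ induced by $\{x\}$ together with all vertices of $G$ satisfying~\eqref{eq:prune}. Hence it suffices to show that for every $(x,y,\delta_L)$-acceptable node $v$ there is a shortest $xv$-path lying entirely in $G'$; the pruned BFS then reaches $v$ along that path. As noted just before the lemma, an acceptable node $v$ itself satisfies $\ecc(v) - \dd(x,v) \ge c_{x,y,\delta_L}$, i.e.\ Equation~\eqref{eq:prune}, so $v \in G'$ and it only remains to handle the internal vertices of a geodesic from $x$ to $v$.

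The key step is a monotonicity claim along geodesics: if $x = u_0, u_1, \dots, u_k = v$ is a shortest $xv$-path (so $k = \dd(x,v)$), then every $u_i$ satisfies~\eqref{eq:prune}. Along a geodesic $\dd(x,u_i) = i$ and $\dd(u_i,v) = k - i$, while the triangle inequality gives $\dd(u_i,z) \ge \dd(v,z) - \dd(u_i,v)$ for every $z \in V$, hence $\ecc(u_i) \ge \ecc(v) - \dd(u_i,v)$. Combining,
\[
  \ecc(u_i) - \dd(x,u_i) \;\ge\; \bigl(\ecc(v) - (k-i)\bigr) - i \;=\; \ecc(v) - \dd(x,v) \;\ge\; c_{x,y,\delta_L},
\]
so $u_i \in G'$. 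Thus the entire geodesic lies in $G'$; since it is also a geodesic in $G'$, an induction on BFS levels shows the pruned BFS visits $v = u_k$: once $u_{i-1}$ is dequeued, its neighbour $u_i$ satisfies~\eqref{eq:prune} and is therefore enqueued. As $v$ was an arbitrary acceptable node, the lemma follows.

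I expect this geodesic-monotonicity claim to be the only real content, and it is not deep: it just uses $\ecc(u) \ge \ecc(v) - \dd(u,v)$ together with the additivity $\dd(x,u_i) + \dd(u_i,v) = \dd(x,v)$, so that $\ecc(\cdot) - \dd(x,\cdot)$ stays $\ge \ecc(v) - \dd(x,v)$ all along the path. The one point to be careful about is that $x$ itself need not satisfy~\eqref{eq:prune}; this is harmless, since $x$ is the search root and is visited unconditionally (and moreover~\eqref{eq:prune} does hold for $x$ in the regime $\dd(x,y) > 2\delta_L$ in which the pruned BFS is actually called, using $\ecc(x) \ge \dd(x,y)$). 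Everything else is routine bookkeeping about BFS on an induced subgraph.
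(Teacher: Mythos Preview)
Your proof is correct and follows essentially the same approach as the paper's: both establish that the function $v\mapsto \ecc(v)-\dd(x,v)$ is non-increasing as one walks from $v$ towards $x$ along a shortest path, using the triangle inequality $\ecc(u)\ge \ecc(v)-\dd(u,v)$. The paper phrases this as ``every node $v\neq x$ satisfying~\eqref{eq:prune} has a neighbour closer to $x$ that also satisfies~\eqref{eq:prune}'' and proceeds by induction, whereas you process an entire geodesic at once; the content is the same.
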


\begin{proof}
The proof follows from two facts. First, any $(x,y,\delta_L)$-acceptable node must satisfy Equation~\eqref{eq:prune}. Second, any node $v\not= x$ satisfying Equation~\eqref{eq:prune} must have some neighbor closer to $x$ that satisfies Equation~\eqref{eq:prune}. This second condition proven bellow allows to easily prove by induction that all nodes satisfying Equation~\eqref{eq:prune} are visited by a $c_{x,y,\delta_L}$-pruned BFS search from $x$.

To prove the above second condition, consider a neighbor $u$ of $v$ at distance $\dd(x,v)-1$ from $x$. By triangle inequality, we have $\ecc(u)\ge \ecc(v)-1$, implying $\ecc(u) - \dd(x,u)\ge \ecc(v) - \dd(x,v)$. As $v$ satisfies Equation~\eqref{eq:prune}, so does $u$.
\end{proof}

Notice that the set $V_{x,y,\delta_L}$ of nodes visited by a $c_{x,y,\delta_L}$-pruned BFS search from $x$ is larger than the set of $(x,y,\delta_L)$-acceptable nodes. Indeed, Conditions 1 to 3 of \Cref{lem:skippable} cannot be used for the search.
Furthermore, remark that the set $V_{x,y,\delta_L}$ depends on the distance $\dd(x,y)$ and not on the precise node $y$. That is,
\begin{lemma} \label{lem:prunedBFS:distance}
For any $z$ such that $\dd(x, y)= \dd(x,z)$, we have $V_{x,z,\delta_L} = V_{x,y,\delta_L}$.
\end{lemma}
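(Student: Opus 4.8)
The plan is to notice that $y$ enters the definition of the $c_{x,y,\delta_L}$-pruned BFS search from $x$ \emph{only} through the value of the scalar threshold $c_{x,y,\delta_L} = 3\delta_L - \frac{3}{2} - \dd(x,y)$, and that this threshold is left unchanged when $y$ is replaced by any vertex $z$ with $\dd(x,z) = \dd(x,y)$.

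Concretely, the first step is the one-line computation
\[
c_{x,z,\delta_L} \;=\; 3\delta_L - \frac{3}{2} - \dd(x,z) \;=\; 3\delta_L - \frac{3}{2} - \dd(x,y) \;=\; c_{x,y,\delta_L},
\]
using the hypothesis $\dd(x,z)=\dd(x,y)$. The second step is to recall, from the definition preceding \Cref{lem:prunedBFS}, that a $c$-pruned BFS from $x$ is the process that starts at $x$ and, upon dequeuing a node $u$, enqueues exactly those neighbors $v$ of $u$ satisfying $\ecc(v) - \dd(x,v) \ge c$, i.e.\ Equation~\eqref{eq:prune} with that value of $c$. The membership test ``$v$ gets enqueued'' therefore refers to $y$ only through $c$; everything else (the graph, the eccentricities, the distances from $x$) is independent of $y$. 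Hence the set of nodes visited — which is $V_{x,y,\delta_L}$ by definition — is a function of $x$ and $c_{x,y,\delta_L}$ alone, and combining this with the first step gives $V_{x,z,\delta_L} = V_{x,y,\delta_L}$, since the two pruned searches from $x$ are, quite literally, the same search.

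The only point needing a little care — and the ``hard part'' only in a very weak sense — is to make sure that ``the set of visited nodes'' is well defined, i.e.\ does not depend on the order in which the BFS queue happens to be processed. This is already handled inside the proof of \Cref{lem:prunedBFS}: for a fixed threshold $c$, every node $v \neq x$ with $\ecc(v)-\dd(x,v) \ge c$ has a neighbor at distance $\dd(x,v)-1$ from $x$ that also satisfies this inequality, so an induction on $\dd(x,\cdot)$ shows that a $c$-pruned BFS from $x$ visits precisely $\{x\}$ together with all nodes satisfying Equation~\eqref{eq:prune} for that $c$, irrespective of tie-breaking. Since this description of $V_{x,y,\delta_L}$ again mentions $y$ only via $c_{x,y,\delta_L}$, \Cref{lem:prunedBFS:distance} follows at once.
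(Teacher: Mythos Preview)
Your proof is correct and follows exactly the paper's reasoning: the paper does not even spell out a proof, simply remarking that $V_{x,y,\delta_L}$ depends on $\dd(x,y)$ and not on the precise node $y$, which is precisely your observation that $y$ enters only through the scalar $c_{x,y,\delta_L}=3\delta_L-\tfrac{3}{2}-\dd(x,y)$. Your additional paragraph on well-definedness of the visited set is a nice bit of extra care but not something the paper deems necessary to state.
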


The following results are direct consequences of Equation~\eqref{eq:prune} and \Cref{lem:prunedBFS:distance}.
\begin{corollary}\label{cor:prunedBFS:eq}
 For any $z$ such that $\dd(x, y) \geq \dd(x,z)$, we have $ V_{x,z,\delta_L}\subseteq V_{x,y,\delta_L}$.
\end{corollary}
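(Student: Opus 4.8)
The plan is to reduce the statement to the two results it is advertised as following from, namely Equation~\eqref{eq:prune} and \Cref{lem:prunedBFS:distance}, by a short monotonicity argument. First I would recall that, by definition, $V_{x,z,\delta_L}$ is exactly the set of nodes reachable from $x$ by a BFS that only ever enqueues neighbors $v$ satisfying $\ecc(v) - \dd(x,v) \ge c_{x,z,\delta_L}$, where $c_{x,z,\delta_L} = 3\delta_L - \tfrac{3}{2} - \dd(x,z)$. The key observation is that the pruning threshold $c_{x,z,\delta_L}$ is monotone non-increasing in $\dd(x,z)$: since $\dd(x,y) \ge \dd(x,z)$, we have $c_{x,y,\delta_L} \le c_{x,z,\delta_L}$. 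Hence any node satisfying the pruning condition for the pair $(x,z)$ — i.e.\ with $\ecc(v) - \dd(x,v) \ge c_{x,z,\delta_L}$ — automatically satisfies it for the pair $(x,y)$ as well.

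Next I would argue that this pointwise inclusion of ``admissible'' node sets lifts to an inclusion of the sets actually visited by the two pruned BFS searches. This is an easy induction on $\dd(x,\cdot)$, in the same spirit as the induction used in the proof of \Cref{lem:prunedBFS}: the root $x$ is visited by both searches; and if a node $v \neq x$ is visited by the $c_{x,z,\delta_L}$-pruned search, then $v$ was enqueued by some already-visited neighbor $u$ with $\dd(x,u) = \dd(x,v)-1$, and $v$ satisfies the $(x,z)$-pruning condition, hence (by the threshold monotonicity above) also the $(x,y)$-pruning condition; by the induction hypothesis $u \in V_{x,y,\delta_L}$, so the $c_{x,y,\delta_L}$-pruned search also enqueues $v$ from $u$, giving $v \in V_{x,y,\delta_L}$. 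This establishes $V_{x,z,\delta_L} \subseteq V_{x,y,\delta_L}$.

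Finally, if one wants to invoke \Cref{lem:prunedBFS:distance} explicitly rather than re-running the induction, one can split into the two cases $\dd(x,z) = \dd(x,y)$ and $\dd(x,z) < \dd(x,y)$: in the first case equality of the two visited sets is exactly \Cref{lem:prunedBFS:distance}; in the second case one still needs the monotonicity-plus-induction argument above, so in practice the cleanest write-up is just the single induction. I do not expect any real obstacle here — the only thing to be careful about is stating precisely what ``the set of nodes visited by a pruned BFS'' means (so that the inductive step is unambiguous), and making sure the threshold inequality points the right way, which it does because larger $\dd(x,z)$ makes the constraint \eqref{eq:prune} \emph{easier} to satisfy.
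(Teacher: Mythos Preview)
Your proposal is correct and matches the paper's intended argument: the paper does not spell out a proof but simply declares the corollary a direct consequence of Equation~\eqref{eq:prune} and \Cref{lem:prunedBFS:distance}, and your monotonicity-of-threshold observation ($c_{x,y,\delta_L} \le c_{x,z,\delta_L}$ when $\dd(x,y)\ge\dd(x,z)$) is exactly what that means. One small simplification: the proof of \Cref{lem:prunedBFS} already establishes that $V_{x,y,\delta_L}$ coincides with the set of nodes satisfying Equation~\eqref{eq:prune} (together with $x$), so the inclusion follows immediately from the threshold comparison without re-running the induction.
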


\begin{corollary}\label{cor:prunedBFS:sub}
 For any $\delta > \delta_L$, we have $V_{x,y,\delta}\subseteq V_{x,y,\delta_L}$.
\end{corollary}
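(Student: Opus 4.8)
The plan is to reduce the claim to the monotonicity of the pruning threshold $c_{x,y,\delta}$ in the parameter $\delta$, using the characterization of the visited set that is already implicit in the proof of \Cref{lem:prunedBFS}. First I would isolate that characterization: for any constant $c$, a $c$-pruned BFS search from $x$ visits exactly the root $x$ together with every vertex $v$ satisfying $\ecc(v) - \dd(x,v) \ge c$. One inclusion is immediate from the construction, since the search only ever enqueues neighbors meeting the threshold, so no visited vertex other than $x$ can violate it. The other inclusion is precisely the inductive argument given in the proof of \Cref{lem:prunedBFS}: the ``second condition'' proven there shows that every $v \ne x$ with $\ecc(v) - \dd(x,v) \ge c$ has a neighbor one step closer to $x$ that also meets the threshold, and induction on $\dd(x,v)$ then shows all such vertices are reached. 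Applying this with $c = c_{x,y,\delta_L}$ describes $V_{x,y,\delta_L}$, and with $c = c_{x,y,\delta}$ describes $V_{x,y,\delta}$.

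Next I would compare the two thresholds. From Equation~\eqref{eq:prune}, $c_{x,y,\delta} - c_{x,y,\delta_L} = 3(\delta - \delta_L)$, which is strictly positive whenever $\delta > \delta_L$. Hence $c_{x,y,\delta} > c_{x,y,\delta_L}$, so any vertex $v$ with $\ecc(v) - \dd(x,v) \ge c_{x,y,\delta}$ also satisfies $\ecc(v) - \dd(x,v) \ge c_{x,y,\delta_L}$. Since the root $x$ belongs to both visited sets regardless of the threshold, combining this with the characterization from the first step yields $V_{x,y,\delta} \subseteq V_{x,y,\delta_L}$, which is the assertion of \Cref{cor:prunedBFS:sub}.

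There is essentially no real obstacle here; the only point deserving care is the first step, namely that the pruned BFS visits an honest ``super-level set'' of the quantity $\ecc(v) - \dd(x,v)$ rather than some path-dependent subset of it. This is exactly what the inductive argument in the proof of \Cref{lem:prunedBFS} guarantees, so once that characterization is stated explicitly, the corollary follows immediately from the fact that $c_{x,y,\delta}$ is increasing in $\delta$.
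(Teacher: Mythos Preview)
Your proposal is correct and matches the paper's reasoning: the paper offers no explicit proof and simply records the corollary as a direct consequence of Equation~\eqref{eq:prune} (together with \Cref{lem:prunedBFS:distance}), which is exactly the monotonicity-in-$\delta$ argument you spell out. Your only addition is to make explicit the characterization of $V_{x,y,\delta_L}$ as a super-level set of $\ecc(v)-\dd(x,v)$, which the paper already establishes inside the proof of \Cref{lem:prunedBFS}.
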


\Cref{lem:prunedBFS:distance,cor:prunedBFS:eq,cor:prunedBFS:sub} enable the use of our pruned BFS in combination with a cache of BFSs. Indeed, during the execution of the algorithm both the considered distance $\dd(x, y)$ decreases and the lower bound $\delta_L$ increases. Hence, a cached $c_{x,y,\delta_L}$-pruned BFS remains valid for future use.
Hence, for Line~\ref{alg:hyp2:cache1} of \Cref{alg:hyp2}, we first check if a BFS from $x$ is in the cache, and if so we retrieve it. Otherwise, we perform a $c_{x,y,\delta_L}$-pruned BFS search from $x$ and add it to the cache.
We proceed similarly for $y$.

Observe also that to determine the sets of $(x,y,\delta_L)$-acceptable and $c$-valuable vertices, it suffices to consider vertices that have been visited by the $c_{x,y,\delta_L}$-pruned BFS search, or by the $c_{y,x,\delta_L}$-pruned BFS search if this set is smaller.

\section{Experimental results}
\label{sec:experimental}

In this section we conduct experiments to test the performance of our new algorithm in comparison to the previous state-of-the-art one. 
We additionally test the impact that each proposed optimization has on the overall performance. To gain further insights into our main tool, the algorithm for efficiently enumerating far-apart pairs, we also conduct experiments to analyze the performance and the number of far-apart pairs that graphs exhibit.

\subsection{Implementation notes}
\label{sec:implementation_notes}

We have implemented all the algorithms in C++ and our code is available~\cite{gitlab}.
In this section, we discuss some implementation choices.

We have implemented a cache of BFSs with bounded capacity $\kappa$ that additionally maintains the age information of the data it stores.
We use a counter $\tau$, initialized to 0, that is increased by one each time a BFS that is already in the cache is accessed, or a BFS that is not in the cache is added.
We associate to a cached BFS from a vertex $x$ an age information $a_x$, initialized to the value of $\tau$ at insertion time. Then, we set $a_x$ to the current value of $\tau$ each time the BFS from $x$ is accessed. Hence, the last accessed BFS is such that $a_x = \tau$.
The use of a hash map associating to a vertex the corresponding BFS and age information enables to decide in  time $\OO(1)$ if a BFS from $x$ is in the cache, and if so to return a pointer on the corresponding data. 
Updating the age information is also done in time $\OO(1)$.
The insertion of a BFS in the cache takes time $\OO(1)$ if the cache is not full, and time $\OO(\kappa)$ as soon as it has reached its maximum capacity. Indeed, the insertion of a BFS when the cache is full requires to remove first the BFS with largest age information, that is, the one of the vertex $x$ maximizing $\tau - a_x$, and so with smallest $a_x$. 
Note that for $\kappa$ much smaller than $n$, the time required for managing the cache is negligible with respect to the time required for a BFS. 
Observe that this cache will be accessed $\OO(n^2)$ times by \Cref{alg:hyp2}, and more precisely at most twice the number of far-apart pairs at distance $2\delta(G)$ or more.

\subsection{Data \& Hardware}

We test
graphs from the BioGRID interaction database (BG-*)~\cite{BIOGRID};
a protein interactions network (dip20170205)~\cite{DIP}; 
and graphs of the autonomous systems from the Internet (CAIDA\_as\_* and DIMES\_*)~\cite{CAIDA,DIMES}.
We also test social networks (Epinions, Hollywood, Slashdot, Twitter),
co-author graphs (ca-*, dblp),
computer networks (Gnutella, Skitter), 
web graphs (NotreDame), 
road networks (oregon2, FLA-t), 
a 3D triangular mesh (buddha), 
and grid-like graphs from VLSI applications (alue7065) 
and from computer games (FrozenSea).
The data is available from \url{snap.stanford.edu}, \url{webgraph.di.unimi.it}, \url{www.dis.uniroma1.it/challenge9}, \url{graphics.stanford.edu}, \url{steinlib.zib.de}, and \url{movingai.com}. Furthermore, we test synthetic inputs: grid300-10 and grid500-10 are square grids with respective sizes $301\times 301$, and $501\times 501$ where 10\% of the edges were randomly deleted.
Each graph is taken as an undirected unweighted graph and we consider only its largest bi-connected component (available from~\cite{gitlab}).
See \Cref{tab:graphs} for the characteristics of these graphs.

We used a computer equipped with two Intel Xeon Gold 6240 CPUs operating at 2.6GHz and 192G RAM to run our experiments. Note that our code uses a single thread.

\begin{table}
\caption{Some graph parameters of all the graphs that we use in our experiments. Note that for all graphs, we extracted the largest biconnected component and restrict to this subgraph in our experiments.}
\label{tab:graphs}
\centering
\begin{tabular}{|l|ccccc|}
\hline
\textbf{Graph} & \textbf{\#nodes} & \textbf{\#edges} & \textbf{radius} & \textbf{diameter} & \textbf{mean ecc.} \\
\hline
\hline
BG-MV-Physical & 9851 & 45558 & 11 & 22 & 14.45 \\
BG-S-Affinity\_Capture-MS & 17793 & 174210 & 6 & 12 & 9.02 \\
BG-S-Affinity\_Capture-RNA & 3339 & 10408 & 3 & 5 & 4.00 \\
BG-S-Affinity\_Capture-Western & 9971 & 44331 & 10 & 20 & 12.60 \\
BG-S-Biochemical\_Activity & 2944 & 10444 & 7 & 13 & 9.20 \\
BG-S-Dosage\_Rescue & 1521 & 4143 & 9 & 17 & 12.21 \\
BG-S-Synthetic\_Growth\_Defect & 3013 & 21341 & 4 & 8 & 5.72 \\
BG-S-Synthetic\_Lethality & 2258 & 12187 & 4 & 7 & 5.45 \\
dip20170205 & 13969 & 60621 & 10 & 17 & 12.95 \\
\hline
CAIDA\_as\_20000102 & 4009 & 10101 & 4 & 8 & 5.62 \\
CAIDA\_as\_20040105 & 10424 & 27061 & 4 & 8 & 5.98 \\
CAIDA\_as\_20050905 & 12957 & 33541 & 5 & 8 & 6.11 \\
CAIDA\_as\_20110116 & 23214 & 89783 & 4 & 8 & 6.04 \\
CAIDA\_as\_20120101 & 25614 & 109180 & 4 & 8 & 6.10 \\
CAIDA\_as\_20130101 & 27454 & 124672 & 5 & 10 & 7.21 \\
CAIDA\_as\_20131101 & 29432 & 143000 & 5 & 9 & 6.46 \\
DIMES\_201012 & 18764 & 84851 & 4 & 7 & 5.16 \\
DIMES\_201204 & 16907 & 66489 & 4 & 7 & 5.07 \\
\hline
p2p-Gnutella09 & 5606 & 23510 & 5 & 8 & 6.31 \\
gnutella31-d & 33812 & 119127 & 6 & 9 & 7.41 \\
notreDame-d & 134958 & 833732 & 18 & 36 & 20.99 \\
\hline
ca-CondMat & 17234 & 84595 & 6 & 12 & 8.44 \\
ca-HepPh & 9025 & 114046 & 6 & 11 & 7.83 \\
ca-HepTh & 5898 & 20983 & 7 & 11 & 8.63 \\
com-dblp.ungraph & 211409 & 883570 & 8 & 15 & 10.92 \\
dblp-2010 & 140610 & 572873 & 10 & 17 & 11.99 \\
email-Enron & 20416 & 163257 & 5 & 9 & 6.54 \\
epinions1-d & 36111 & 365253 & 5 & 9 & 6.36 \\
facebook\_combined & 3698 & 85963 & 4 & 6 & 5.26 \\
loc-brightkite & 33187 & 188577 & 6 & 11 & 7.95 \\
loc-gowalla\_edges & 137519 & 887929 & 6 & 11 & 7.91 \\
slashdot0902-d & 51528 & 473218 & 5 & 8 & 6.04 \\
\hline
oregon2\_010331 & 7602 & 27870 & 4 & 8 & 5.89 \\
t.FLA-w & 691175 & 941893 & 890 & 1780 & 1378.52 \\
\hline
buddha-w & 543652 & 1631574 & 244 & 487 & 360.44 \\
froz-w & 749520 & 2895228 & 812 & 1451 & 1130.38 \\
grid300-10 & 90211 & 162152 & 300 & 600 & 450.50 \\
grid500-10 & 250041 & 449831 & 500 & 1000 & 750.49 \\
z-alue7065 & 34040 & 54835 & 213 & 426 & 319.43 \\
\hline
\end{tabular}
\end{table}

\subsection{Parameter choice}
As there are instances which do not terminate in reasonable time or need unreasonable amounts of memory, we cap both resources at a fixed value to obtain a clear picture. More precisely, for each graph and each experiment, we kill the process as soon as it takes more than 6 hours or uses more than 192 GB of memory. We use the $\skull$ symbol to indicate a killed process and, if applicable, put this symbol in the column (i.e., time or memory) that caused the process to be killed.
Furthermore, for the cache described in Section~\ref{sec:implementation_notes}, we use a size of 1000, unless mentioned otherwise.

\subsection{Comparison to previous work}

To evaluate the performance improvement over previous approaches, we compare to the practical state-of-the-art algorithm, which was given in~\cite{BorassiCCM15}.
To this end, we measure the single-threaded computation time, the memory, and the best upper and lower bound found for our new algorithm as well as the one by Borassi et al.~\cite{BorassiCCM15}. The results of this experiment are shown in \Cref{tab:borassi_vs_ours}.

\begin{table}
\caption{Comparing the memory and time consumption for our algorithm and \cite{BorassiCCM15}. If there is a timeout, we state the best lower and upper bounds on the hyperbolicity that we obtained.}
\label{tab:borassi_vs_ours}
\centering
  \resizebox{\linewidth}{!}{
    \renewcommand{\arraystretch}{1.1}
\begin{tabular}{|l|ccc|ccc|}
\hline
{\bfseries Graph}	&	\multicolumn{3}{c|}{{\bfseries Borassi et al. \cite{BorassiCCM15}}}	&	\multicolumn{3}{c|}{{\bfseries Our Algorithm}}\\\hline
					&	{\bfseries time (s)}	&	{\bfseries memory}	& \textbf{hyperb.} & {\bfseries time (s)}	&	{\bfseries memory} & \textbf{hyperb.} \\
\hline
\hline
BG-MV-Physical & 10429 & 1.03 GB & 4.5 & 6725 & 166.33 MB & 4.5 \\
BG-S-Affinity\_Capture-MS & $\skull$ & -- & [2.0, 4.0] & $\skull$ & -- & [2.0, 4.0] \\
BG-S-Affinity\_Capture-RNA & 0.77 & 157.18 MB & 2.0 & 0.67 & 57.29 MB & 2.0 \\
BG-S-Affinity\_Capture-Western & 7827 & 1.05 GB & 4.0 & 5255 & 154.66 MB & 4.0 \\
BG-S-Biochemical\_Activity & 8.45 & 104.90 MB & 3.0 & 16.32 & 46.16 MB & 3.0 \\
BG-S-Dosage\_Rescue & 11.97 & 30.54 MB & 4.0 & 14.31 & 24.43 MB & 4.0 \\
BG-S-Synthetic\_Growth\_Defect & 1.73 & 108.76 MB & 2.0 & 2.93 & 43.82 MB & 2.0 \\
BG-S-Synthetic\_Lethality & 1.11 & 77.09 MB & 2.0 & 1.75 & 33.40 MB & 2.0 \\
dip20170205 & $\skull$ & -- & [4.5, 5.0] & 18318 & 339.30 MB & 4.5 \\
\hline
CAIDA\_as\_20000102 & 1.28 & 260.87 MB & 2.5 & 1.25 & 56.40 MB & 2.5 \\
CAIDA\_as\_20040105 & 18.43 & 1.90 GB & 2.5 & 25.87 & 166.40 MB & 2.5 \\
CAIDA\_as\_20050905 & 16.9 & 2.37 GB & 3.0 & 21.77 & 203.72 MB & 3.0 \\
CAIDA\_as\_20110116 & 13806 & 8.41 GB & 2.0 & 13491 & 556.72 MB & 2.0 \\
CAIDA\_as\_20120101 & $\skull$ & -- & [2.0, 2.5] & $\skull$ & -- & [2.0, 2.5] \\
CAIDA\_as\_20130101 & 2961.95 & 10.09 GB & 2.5 & 3535.82 & 1.47 GB & 2.5 \\
CAIDA\_as\_20131101 & $\skull$ & -- & [2.0, 2.5] & 16108 & 593.03 MB & 2.5 \\
DIMES\_201012 & 465.76 & 4.86 GB & 2.0 & 6043 & 482.41 MB & 2.0 \\
DIMES\_201204 & 66.35 & 4.34 GB & 2.0 & 56.17 & 304.48 MB & 2.0 \\
\hline
p2p-Gnutella09 & 2.9 & 381.64 MB & 3.0 & 2.35 & 98.38 MB & 3.0 \\
gnutella31-d & 139.54 & 13.13 GB & 3.5 & 109.49 & 1.51 GB & 3.5 \\
notreDame-d & -- & $\skull$ & -- & 4514 & 53.02 GB & 8.0 \\
\hline
ca-CondMat & 112.76 & 3.38 GB & 3.5 & 172.94 & 281.18 MB & 3.5 \\
ca-HepPh & 36.97 & 1.17 GB & 3.0 & 86.35 & 152.86 MB & 3.0 \\
ca-HepTh & 4.02 & 407.78 MB & 4.0 & 2.99 & 91.54 MB & 4.0 \\
com-dblp.ungraph & -- & $\skull$ & -- & 5449 & 14.20 GB & 5.0 \\
dblp-2010 & -- & $\skull$ & -- & 6904 & 7.51 GB & 5.5 \\
email-Enron & 754.13 & 5.36 GB & 2.5 & 897.72 & 404.38 MB & 2.5 \\
epinions1-d & 696.73 & 18.60 GB & 2.5 & 2331.33 & 759.20 MB & 2.5 \\
facebook\_combined & 6919 & 248.00 MB & 1.5 & 4551 & 136.65 MB & 1.5 \\
loc-brightkite & 910.39 & 12.81 GB & 3.0 & 910.12 & 812.98 MB & 3.0 \\
loc-gowalla\_edges & -- & $\skull$ & -- & 14478 & 4.04 GB & 3.5 \\
slashdot0902-d & 9560 & 37.54 GB & 2.5 & 4718 & 1.34 GB & 2.5 \\
\hline
oregon2\_010331 & 72.99 & 980.82 MB & 2.0 & 65.13 & 133.76 MB & 2.0 \\
t.FLA-w & -- & $\skull$ & -- & $\skull$ & -- & [81.0, 835.5] \\
\hline
buddha-w & -- & $\skull$ & -- & $\skull$ & -- & [93.0, 221.0] \\
froz-w & -- & $\skull$ & -- & $\skull$ & -- & [367.5, 633.5] \\
grid300-10 & -- & $\skull$ & -- & 10.13 & 1.08 GB & 280.0 \\
grid500-10 & -- & $\skull$ & -- & 99.64 & 2.99 GB & 463.0 \\
z-alue7065 & 35.01 & 9.07 GB & 138.0 & 33.47 & 431.18 MB & 138.0 \\
\hline
\end{tabular}
}
\end{table}

There are several interesting observations that one can derive from these experiments. First, the new approach that we present in this paper needs significantly less memory. More precisely, the memory consumption is up to a factor of 28 times lower than for \cite{BorassiCCM15} (see the slashdot0902-d graph), only considering the graphs where both approaches stay within the limits. If we also consider the graphs where \cite{BorassiCCM15} runs out of memory, we use at least a factor 177 less memory (see grid300-10). Mainly due to this significant reduction of memory consumption, we are able to compute the hyperbolicity for graphs which were previously out of reach due to excessive amounts of memory which would have been necessary. However, there are also graphs for which we can compute the hyperbolicity, which hit the timeout limit of 6 hours when using \cite{BorassiCCM15}. Note that in these two cases, the computation roughly takes between 4 and 5 hours, so \cite{BorassiCCM15} takes at least 1 to 2 hours more time to finish on these instances. In general, we note that all graphs for which \cite{BorassiCCM15} computes the hyperbolicity within the time and memory limits, also stay within the limits using our approach. Over all 40 benchmark instances, our new approach computes the hyperbolicity for 8 more graphs than \cite{BorassiCCM15}. If the hyperbolicity cannot be computed within the limits with our new approach, then this is due to running out of time instead of running out of memory. This again shows that we achieve a drastic reduction in memory consumption for computing the hyperbolicity.

While we are sometimes faster than \cite{BorassiCCM15}, one could also assume that sometimes it is the other way around. This is indeed the case, and while for almost all graphs we are slightly faster or slightly slower, there is one instance (namely, DIMES\_201012) where we are a factor of 13 slower than \cite{BorassiCCM15}. We suspect that for this graph, there are many nodes from which we repeatedly perform BFSs to obtain distances.

We specifically want to highlight two grid graphs, grid300-10 and grid500-10, for which our approach only takes 10 seconds and 90 seconds, respectively, while \cite{BorassiCCM15} fails to compute the hyperbolicity due to exceeding the 192 GB memory limit. Note that for the former grid graph, only 1.08 GB of memory is needed for our approach and thus the memory usage is lower by at least a factor of 177. This drastic reduction of memory comes from the highly structured input: a perfect grid only has 2 far-apart pairs, the two pairs of opposing corners. Thus, the hyperbolicity computation only needs to consider these two pairs and computing the shortest paths between all pairs is hugely wasteful. Our approach includes this natural intuition to lazily compute the distances between the pairs of nodes to avoid huge running times and memory consumption in cases like these.

\begin{figure}[htbp]
    \centering
	\includegraphics[width=.8\textwidth]{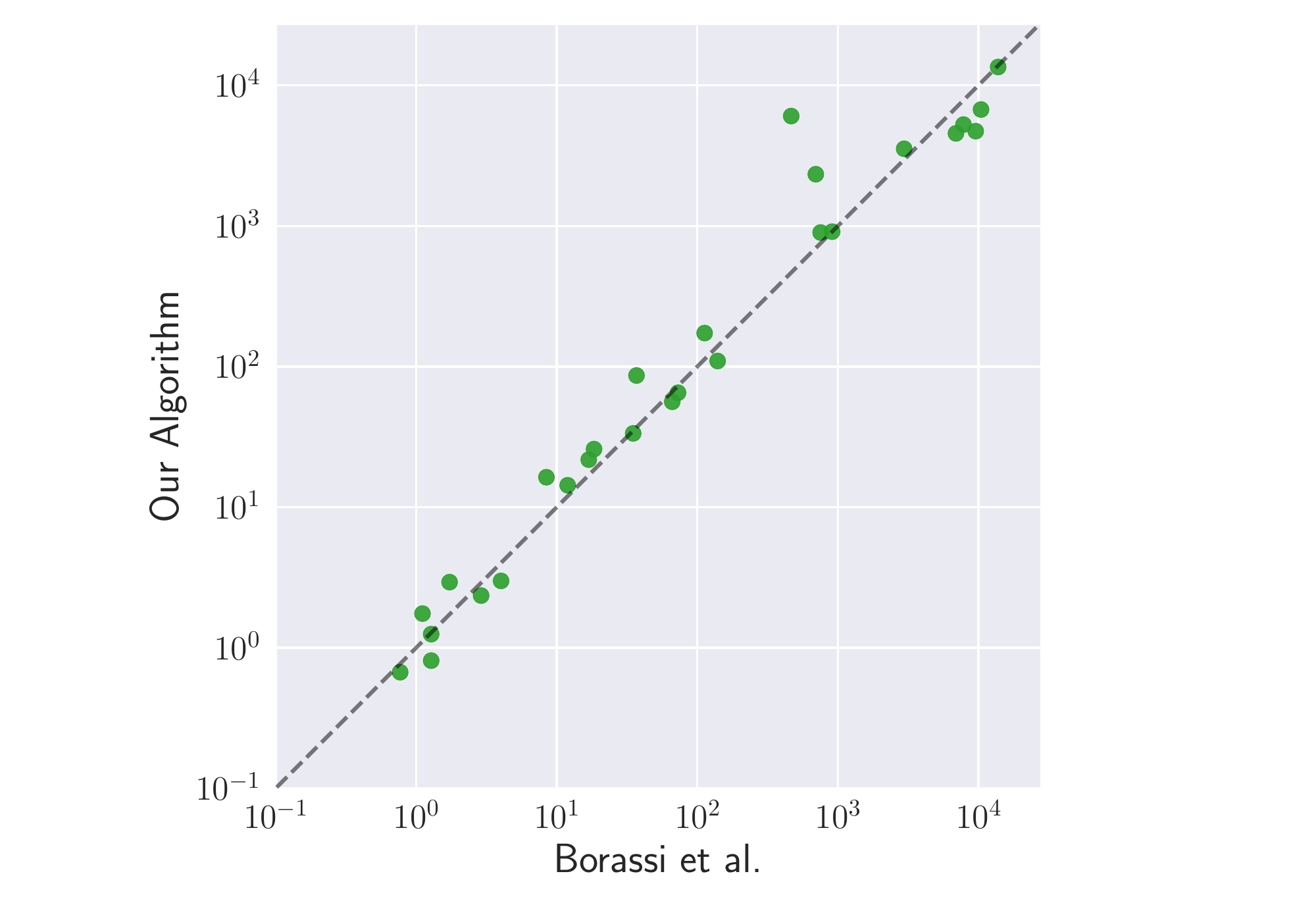}
	\caption{Comparing the \emph{time} in seconds for computing the hyperbolicity on all graphs that finish using both algorithms. The dashed line is the identity, i.e., where both algorithms would take the same time.}
    \label{fig:scatter_time}

	\vspace*{\floatsep}

    \centering
	\includegraphics[width=.8\textwidth]{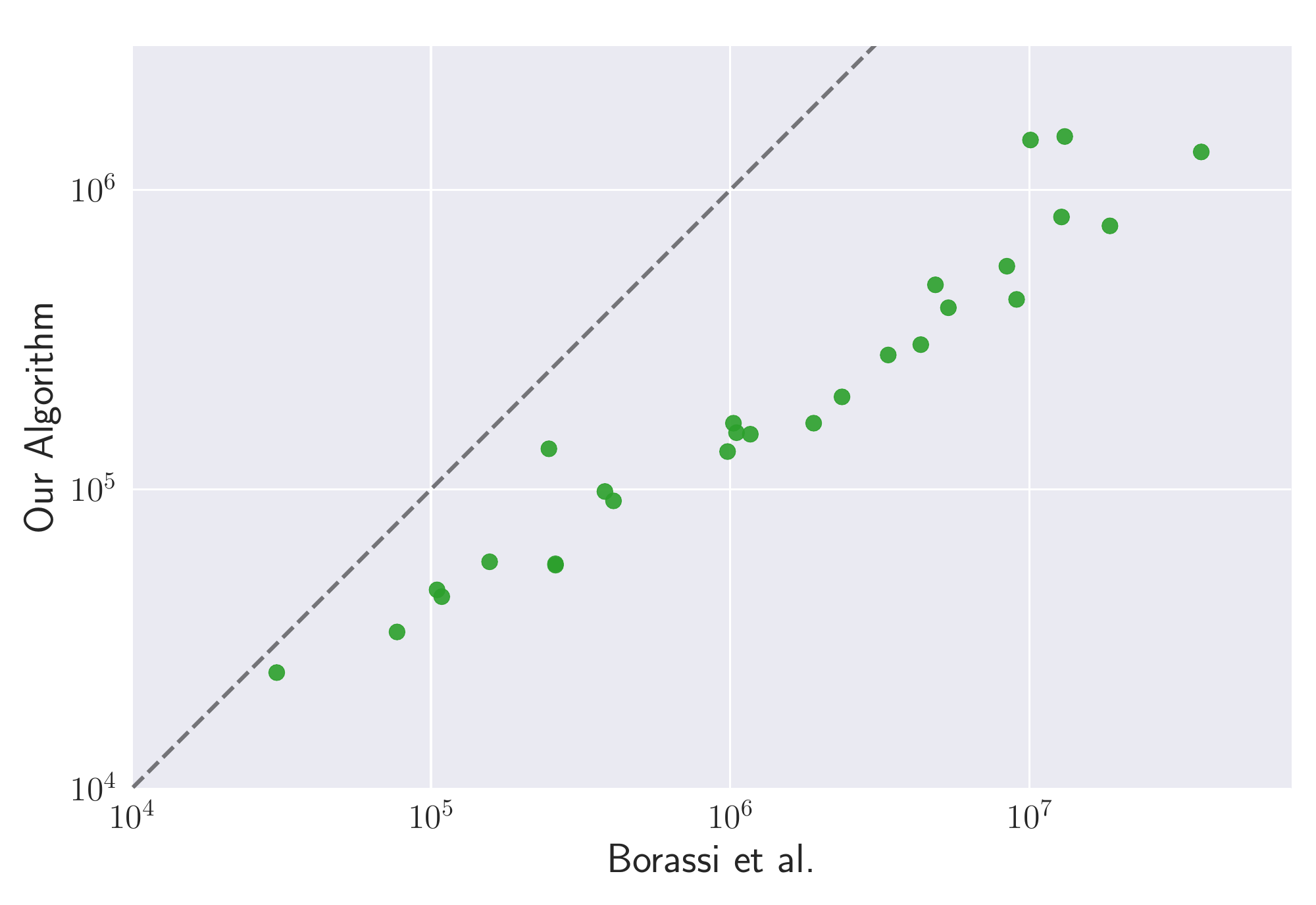}
	\caption{Comparing the \emph{memory} in MB for computing the hyperbolicity on all graphs that finish using both algorithms. The dashed line is the identity, i.e., where both algorithms would use the same amount of memory.}
    \label{fig:memory_time}
\end{figure}

For a better overview, we plotted a comparison of the time and memory usage of both algorithms, see \Cref{fig:scatter_time,fig:memory_time}. In \Cref{fig:scatter_time} we can see that the times indeed are very similar for both approaches except for the single outlier mentioned above. In \Cref{fig:memory_time} we can see the drastic reduction in memory usage. Note again that the graphs included in this figure are only the graphs on which both algorithms terminate within the limits. The graphs where~\cite{BorassiCCM15} exceeds the limits while our approach stays withing the limits are not shown. In particular, the memory reduction by at least a factor of 177 is not shown in this plot. Furthermore, we can see that with increasing memory consumption, also the advantage that our algorithm has over~\cite{BorassiCCM15} with respect to memory usage becomes more and more pronounced.

\subsection{Impact of different optimizations}

Additional to the comparison with the previous state-of-the-art algorithm, we also conduct experiments to obtain insights into the impact of the different optimizations that we used in our algorithm. In particular, we want to know the impact of the lower bound initialization (see \Cref{sec:lower_bound_init}), the cache size (see \Cref{sec:cache}), and the pruning (see \Cref{sec:prunedBFS}). We conduct experiments with all our graphs with the normal cache size of 1000 with no heuristic and no pruning, heuristic but no pruning, pruning but no heuristic, and heuristic and pruning. To gain more insight into the effect of the BFS cache, we also conduct the last experiment  with a cache size of 2, which means that we only store the current two BFSs in memory. See \Cref{tab:prune_heur} for the results of these experiments.

The overall results of these experiments are that there is no significant benefit for the running time when using the heuristically computed initial lower bound. However, the pruning has significant impact depending on the graph: sometimes it does not help much, but in several cases it does decrease the running time by a larger factor -- up to a factor of 5 (see the DIMES\_201204). The cache size also has very different impact on different graphs. It can reduce the time by a factor of 1.4, but it also sometimes increases the running time. However, the positive effects outweigh the negative effects and we thus consider it a worthwhile optimization.

\begin{table}[htbp]
	\caption{Times for computing the hyperbolicity with different optimizations enabled. All entries are in seconds. The columns with \enquote{heur} use the lower bound initialization presented in \Cref{sec:lower_bound_init}, and the columns with \enquote{prune} use the pruning of \Cref{sec:prunedBFS}. The value of $c$ in the second row gives the size of the BFS cache presented in \Cref{sec:cache}.}
\label{tab:prune_heur}
\centering
\begin{tabular}{|l|ccc|cc|}
\hline
{\bfseries Graph}	& \textbf{--} & \textbf{heur} & \textbf{prune} & \multicolumn{2}{c|}{\textbf{heur \& prune}} \\
					& {\boldmath$c=1000$}& {\boldmath$c=1000$}& {\boldmath$c=1000$}& {\boldmath $c=2$} & {\boldmath$c=1000$} \\
\hline
\hline
BG-MV-Physical & 7040 & 7047 & 6745 & 6867 & 6725 \\
BG-S-Affinity\_Capture-MS & $\skull$ & $\skull$ & $\skull$ & $\skull$ & $\skull$ \\
BG-S-Affinity\_Capture-RNA & 0.65 & 0.69 & 0.67 & 0.72 & 0.67 \\
BG-S-Affinity\_Capture-Western & 5464 & 5450 & 5236 & 5297 & 5255 \\
BG-S-Biochemical\_Activity & 16.52 & 16.51 & 16.25 & 20.17 & 16.32 \\
BG-S-Dosage\_Rescue & 14.91 & 14.74 & 14.58 & 16.47 & 14.31 \\
BG-S-Synthetic\_Growth\_Defect & 3.25 & 3.3 & 2.99 & 4.61 & 2.93 \\
BG-S-Synthetic\_Lethality & 2.14 & 2.17 & 1.75 & 2.49 & 1.75 \\
dip20170205 & 19088 & 19144 & 18492 & 18473 & 18318 \\
\hline
CAIDA\_as\_20000102 & 1.24 & 1.26 & 1.23 & 1.72 & 1.25 \\
CAIDA\_as\_20040105 & 39.89 & 40.03 & 25.69 & 34.97 & 25.87 \\
CAIDA\_as\_20050905 & 35.63 & 36.91 & 21.66 & 23.76 & 21.77 \\
CAIDA\_as\_20110116 & 16079 & 16132 & 13700 & 13412 & 13491 \\
CAIDA\_as\_20120101 & $\skull$ & $\skull$ & $\skull$ & $\skull$ & $\skull$ \\
CAIDA\_as\_20130101 & 3292.8 & 3313.78 & 3594.91 & 3848 & 3535.82 \\
CAIDA\_as\_20131101 & 15785 & 15766 & 16036 & 16683 & 16108 \\
DIMES\_201012 & 6144 & 6139 & 6032 & 6574 & 6043 \\
DIMES\_201204 & 294.08 & 292.47 & 56.01 & 68.74 & 56.17 \\
\hline
p2p-Gnutella09 & 2.0 & 2.41 & 1.87 & 2.27 & 2.35 \\
gnutella31-d & 172.25 & 191.22 & 91.08 & 104.68 & 109.49 \\
notreDame-d & 4317 & 4315 & 4472 & 4795 & 4514 \\
\hline
ca-CondMat & 385.97 & 387.39 & 177.16 & 197.9 & 172.94 \\
ca-HepPh & 144.67 & 145.0 & 87.82 & 103.41 & 86.35 \\
ca-HepTh & 3.84 & 4.06 & 2.75 & 4.71 & 2.99 \\
com-dblp.ungraph & $\skull$ & $\skull$ & 5352 & 5713 & 5449 \\
dblp-2010 & 14677 & 14675 & 6610 & 6560 & 6904 \\
email-Enron & 1121.16 & 1108.76 & 906.45 & 1184.8 & 897.72 \\
epinions1-d & 2626.57 & 2607.03 & 2331.39 & 3563.06 & 2331.33 \\
facebook\_combined & 4677 & 4623 & 4552 & 4609 & 4551 \\
loc-brightkite & 1696.58 & 1688.57 & 928.46 & 1108.55 & 910.12 \\
loc-gowalla\_edges & $\skull$ & $\skull$ & 14259 & 13169 & 14478 \\
slashdot0902-d & 4492 & 4463 & 4775 & 6766 & 4718 \\
\hline
oregon2\_010331 & 124.18 & 124.45 & 65.62 & 79.84 & 65.13 \\
t.FLA-w & $\skull$ & $\skull$ & $\skull$ & $\skull$ & $\skull$ \\
\hline
buddha-w & $\skull$ & $\skull$ & $\skull$ & $\skull$ & $\skull$ \\
froz-w & $\skull$ & $\skull$ & $\skull$ & $\skull$ & $\skull$ \\
grid300-10 & 10.2 & 10.36 & 10.13 & 8.7 & 10.13 \\
grid500-10 & 102.21 & 102.31 & 99.84 & 83.08 & 99.64 \\
z-alue7065 & 36.52 & 37.04 & 33.76 & 33.63 & 33.47 \\
\hline
\end{tabular}
\end{table}

\subsection{BFS cache size experiments}
To gain further insights into how the BFS cache we use affects the running time behavior, we run experiments with different cache sizes on selected graphs, see \Cref{fig:cache_size_vs_time}. We again see different behavior on different instances. While the times monotonously decrease with increasing cache size for two of the graphs, for the other two we have increase-decrease and decrease-increase patterns. One reason for this behavior might be the cache size of the processor. In particular, if the graph already fits in the CPU cache (e.g., L1), then computing a BFS is quite fast. Especially for large BFS cache sizes which might push the graph out of the CPU cache and also might reside in a higher level CPU cache, the computation can become slower.

\begin{figure}[tbp]
	\centering
	\includegraphics[width=.8\textwidth]{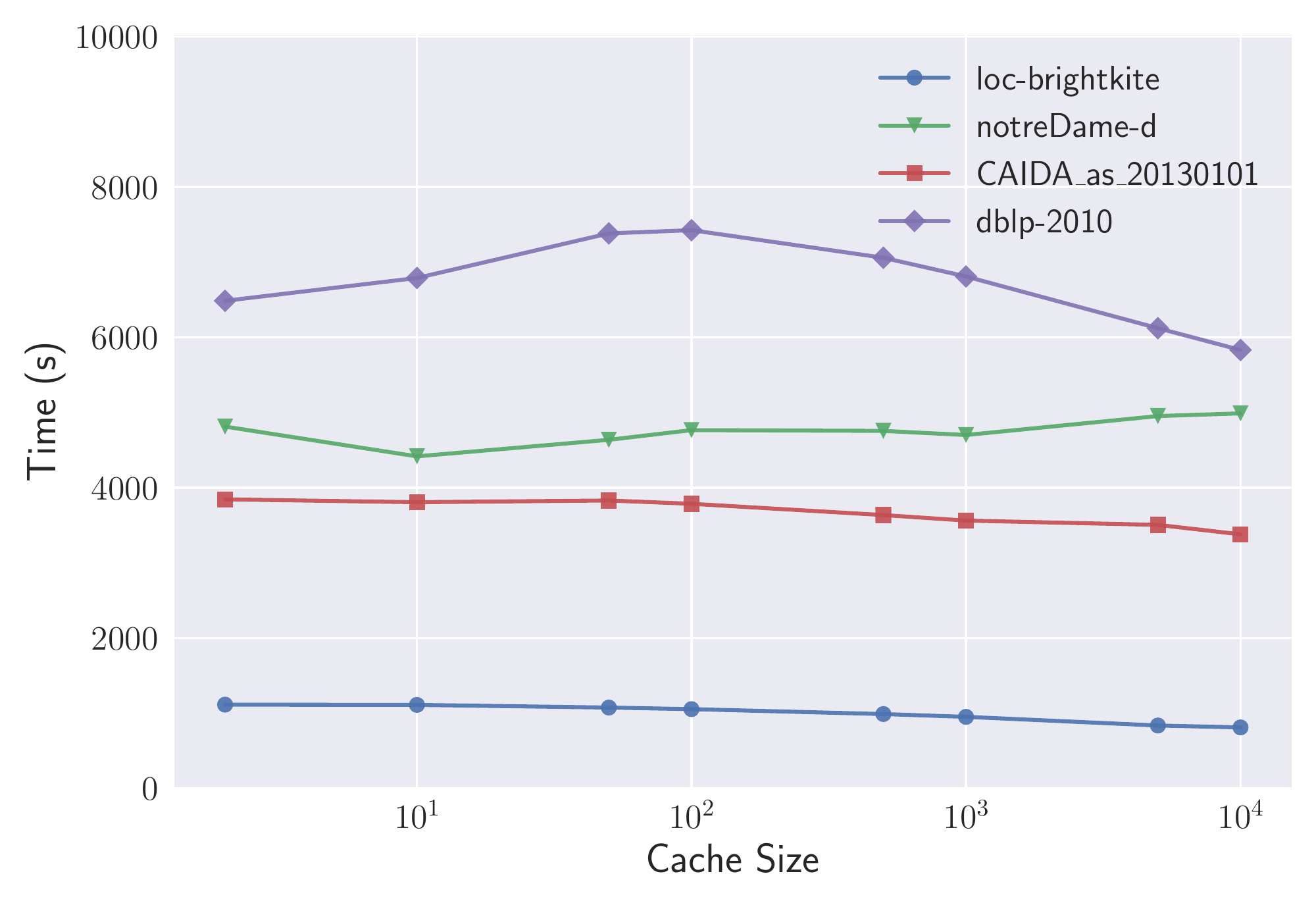}
	\caption{Plot for four graphs showing the running time development depending on the BFS cache size.}
	\label{fig:cache_size_vs_time}
\end{figure}

\subsection{Far-apart pairs iterator experiments}
Finally, we perform experiments to only analyze the behavior of the far-apart pairs iterator. To this end, we let the far-apart pairs iterator run on all our benchmark graphs and measure the time as well as memory consumption. Additionally, we are interested in the number of far-apart pairs that the different instances have as well as how many pairs of an instance are necessary for the hyperbolicity calculation of our new algorithm. As the graphs are of different sizes, we put the number of pairs in the last two columns in relation to the total number of node pairs in the graph. The results of these experiments are shown in \Cref{tab:far_apart_iterator}. For all except four graphs, the far-apart pairs iterator runs through in the given memory and time limits. 
This is one more tractable instance compared to hyperbolicity computation. 
Note, however, that there are graphs for which we can compute the hyperbolicity but the far-apart pairs iterator does not run through within the limits (e.g., dblp-2010). This is explained by the fact that to compute the hyperbolicity, we can stop at some point of iterating through the far-apart pairs and do not have to compute them all. More specifically, we show what percentage of all pairs are necessary to compute the hyperbolicity with our algorithm on this specific instance. Observe that for the dblp-2010 graph, only around 0.27\% of pairs are relevant for the hyperbolicity computation. In general, most instances only have a single-digit percentage or less of relevant pairs for the hyperbolicity computation. The grid graphs, grid300-10 and grid500-10, even have such a small number of relevant pairs, that they are rounded to 0 in the precision that we choose for the numbers in the table. These low numbers of relevant pairs explain the drastic memory reduction that we achieve with our algorithm.

We can also see that for many graphs, the far-apart pairs iterator is very fast, while the hyperbolicity computation takes a long time, which is explained by the fact that we might spend quadratic time per pair to compute the hyperbolicity. Considering the percentage of far-apart pairs, we see that most graphs have roughly 30\% to 70\% far-apart pairs. The graphs with grid structure are extreme outliers, exhibiting a very low percentage of far-apart pairs. This can be explained by the grid structure, for which -- considering a grid without missing edges -- only the two pairs of opposing corners of the grid are far-apart. Furthermore, the facebook\_combined graph has a very large percentage of far-apart pairs. This is explained by the fact that it has, by far, the highest average degree of all the graphs we consider. For such a well-connected graph, BFS trees have a very large number of leaves as shortest paths are not extended further from most nodes, which is necessary to produce such a large number of far-apart pairs.

To further gain insights into the distribution of far-apart pairs, i.e., how they are distributed over various distances, we put all the distances between far-apart pairs into a histogram for four selected graphs, see \Cref{fig:hists}. While the distribution looks somewhat reasonable and expected for three of the four graphs, the notreDame-d graph shows an interesting distribution with two modes.
Note that anomalies with respect to coreness were already found in this graph~\cite{ShinEF18} where a special structure around a propeller-shaped subgraph was identified. We suspect that this structure connects a large fraction of pairs and could be related to this bi-modal observation. Finally, we also marked the part of the histogram that contains the far-apart pairs that are relevant for the hyperbolicity calculation. We observe that this region occurs after the peak of the histogram in all four instances that we show. In the notreDame-d instance, it even completely excludes the first, much larger mode.

\begin{table}[htbp]
	\caption{Time and memory consumption of the far-apart pairs iterator only. We additionally show the percentage of far-apart pairs (\enquote{far pairs}) and also the percentage of pairs which have to be considered during the hyperbolicity computation of our algorithm (\enquote{hyp pairs}). }
\label{tab:far_apart_iterator}
\centering
\begin{tabular}{|l|cccc|}
\hline
{\bfseries Graph} &	{\bfseries time (s)}	&	{\bfseries memory}	&	{\bfseries far pairs (\%)}	&	{\bfseries hyp pairs (\%)} \\
\hline
\hline

BG-MV-Physical & 21.85 & 391.23 MB & 30.91 & 4.043 \\
BG-S-Affinity\_Capture-MS & 94.32 & 1.33 GB & 36.92 & -- \\
BG-S-Affinity\_Capture-RNA & 3.29 & 124.32 MB & 70.86 & 0.602 \\
BG-S-Affinity\_Capture-Western & 23.27 & 416.42 MB & 32.42 & 5.355 \\
BG-S-Biochemical\_Activity & 1.82 & 75.47 MB & 42.95 & 8.196 \\
BG-S-Dosage\_Rescue & 0.35 & 29.13 MB & 24.67 & 5.223 \\
BG-S-Synthetic\_Growth\_Defect & 2.13 & 87.80 MB & 46.12 & 17.973 \\
BG-S-Synthetic\_Lethality & 1.08 & 54.61 MB & 42.75 & 24.658 \\
dip20170205 & 45.85 & 686.76 MB & 30.12 & 11.245 \\
\hline
CAIDA\_as\_20000102 & 4.4 & 150.73 MB & 63.94 & 5.456 \\
CAIDA\_as\_20040105 & 40.36 & 835.38 MB & 67.35 & 5.806 \\
CAIDA\_as\_20050905 & 65.66 & 1.25 GB & 69.01 & 0.502 \\
CAIDA\_as\_20110116 & 261.31 & 3.64 GB & 69.56 & 40.066 \\
CAIDA\_as\_20120101 & 361.92 & 4.39 GB & 68.86 & -- \\
CAIDA\_as\_20130101 & 438.98 & 5.02 GB & 68.89 & 5.292 \\
CAIDA\_as\_20131101 & 490.41 & 5.81 GB & 69.26 & 5.075 \\
DIMES\_201012 & 174.08 & 3.19 GB & 74.89 & 13.431 \\
DIMES\_201204 & 144.55 & 2.40 GB & 72.51 & 21.459 \\
\hline
p2p-Gnutella09 & 6.47 & 190.00 MB & 28.42 & 4.036 \\
gnutella31-d & 407.55 & 4.67 GB & 29.49 & 3.787 \\
notreDame-d & 9817 & 86.14 GB & 54.38 & 0.002 \\
\hline
ca-CondMat & 105.14 & 1.49 GB & 44.0 & 4.785 \\
ca-HepPh & 25.7 & 453.93 MB & 42.31 & 8.63 \\
ca-HepTh & 7.82 & 196.91 MB & 33.72 & 1.961 \\
com-dblp.ungraph & -- & $\skull$ & -- & -- \\
dblp-2010 & 14008 & 87.83 GB & 48.38 & 0.268 \\
email-Enron & 172.75 & 2.67 GB & 55.87 & 10.95 \\
epinions1-d & 630.49 & 7.38 GB & 45.46 & 6.683 \\
facebook\_combined & 4.5 & 158.98 MB & 89.08 & 71.779 \\
loc-brightkite & 425.11 & 4.45 GB & 36.35 & 5.019 \\
loc-gowalla\_edges & 10509 & 63.43 GB & 33.19 & 0.683 \\
slashdot0902-d & 1442.04 & 15.35 GB & 45.72 & 5.238 \\
\hline
oregon2\_010331 & 19.54 & 453.85 MB & 66.43 & 31.757 \\
t.FLA-w & -- & $\skull$ & -- & -- \\
\hline
buddha-w & $\skull$ & -- & -- & -- \\
froz-w & $\skull$ & -- & -- & -- \\
grid300-10 & 345.36 & 4.24 GB & 0.04 & 0.0 \\
grid500-10 & 2723.73 & 19.24 GB & 0.04 & 0.0 \\
z-alue7065 & 46.95 & 940.21 MB & 0.06 & 0.002 \\
\hline
\end{tabular}
\end{table}

\begin{figure}[htbp]
\begin{subfigure}{.48\textwidth}
    \centering
	\includegraphics[width=\linewidth]{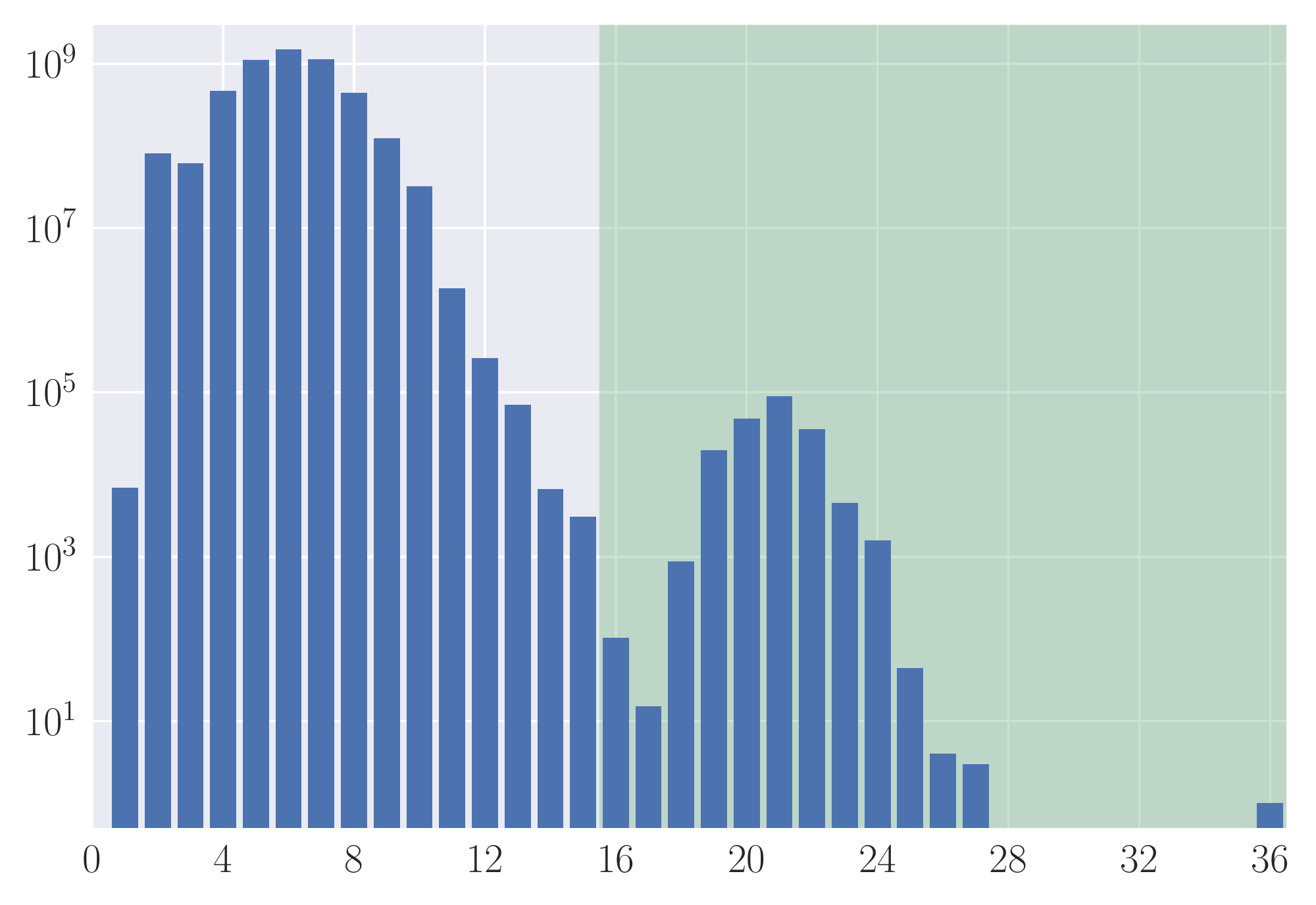}
    \caption{notreDame-d}
	\label{fig:hist_notre_dame}
\end{subfigure} \hfil
\begin{subfigure}{.48\textwidth}
    \centering
	\includegraphics[width=\linewidth]{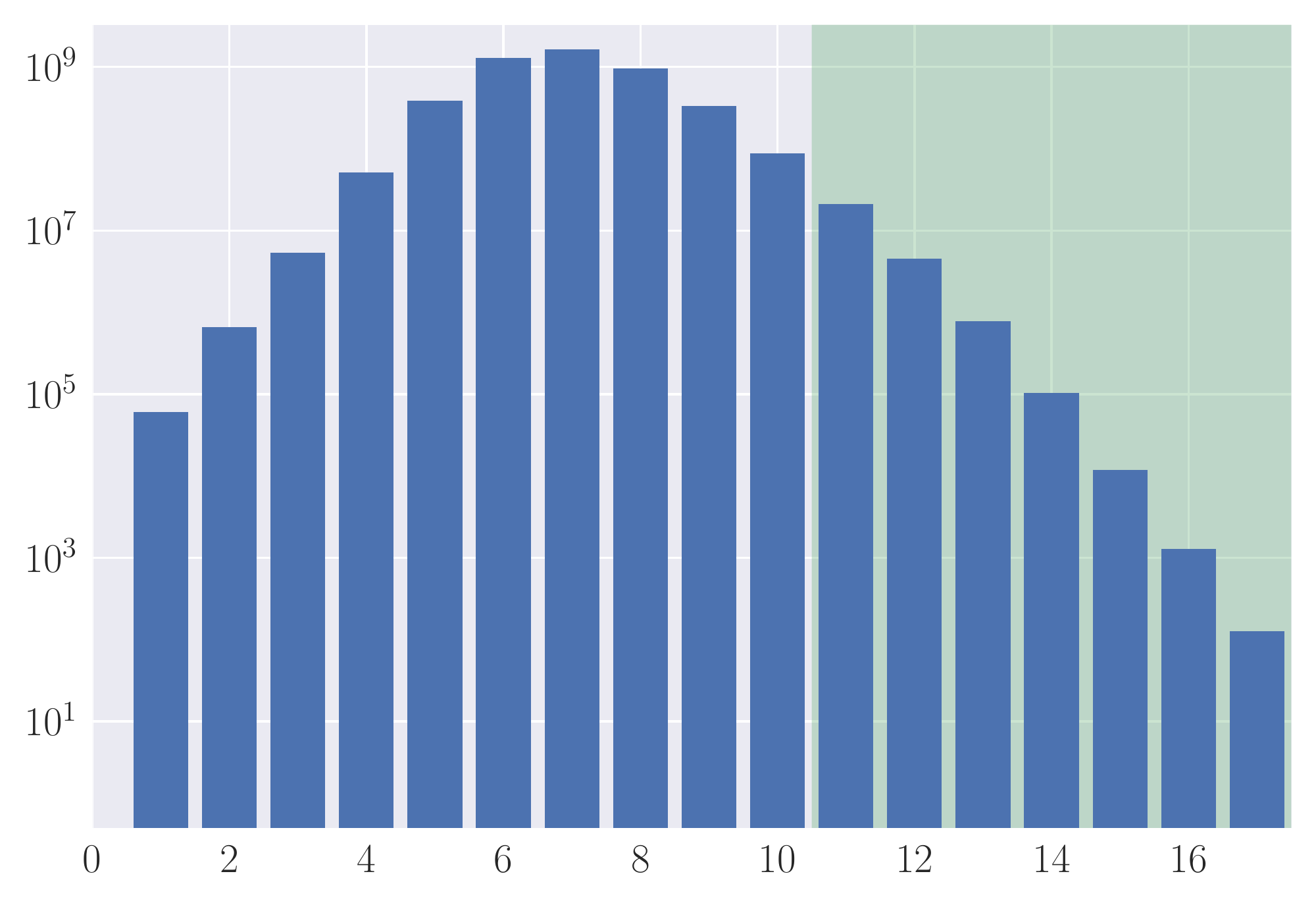}
    \caption{dblp-2010}
	\label{fig:hist_dblp}
\end{subfigure} \\
\begin{subfigure}{.48\textwidth}
    \centering
	\includegraphics[width=\linewidth]{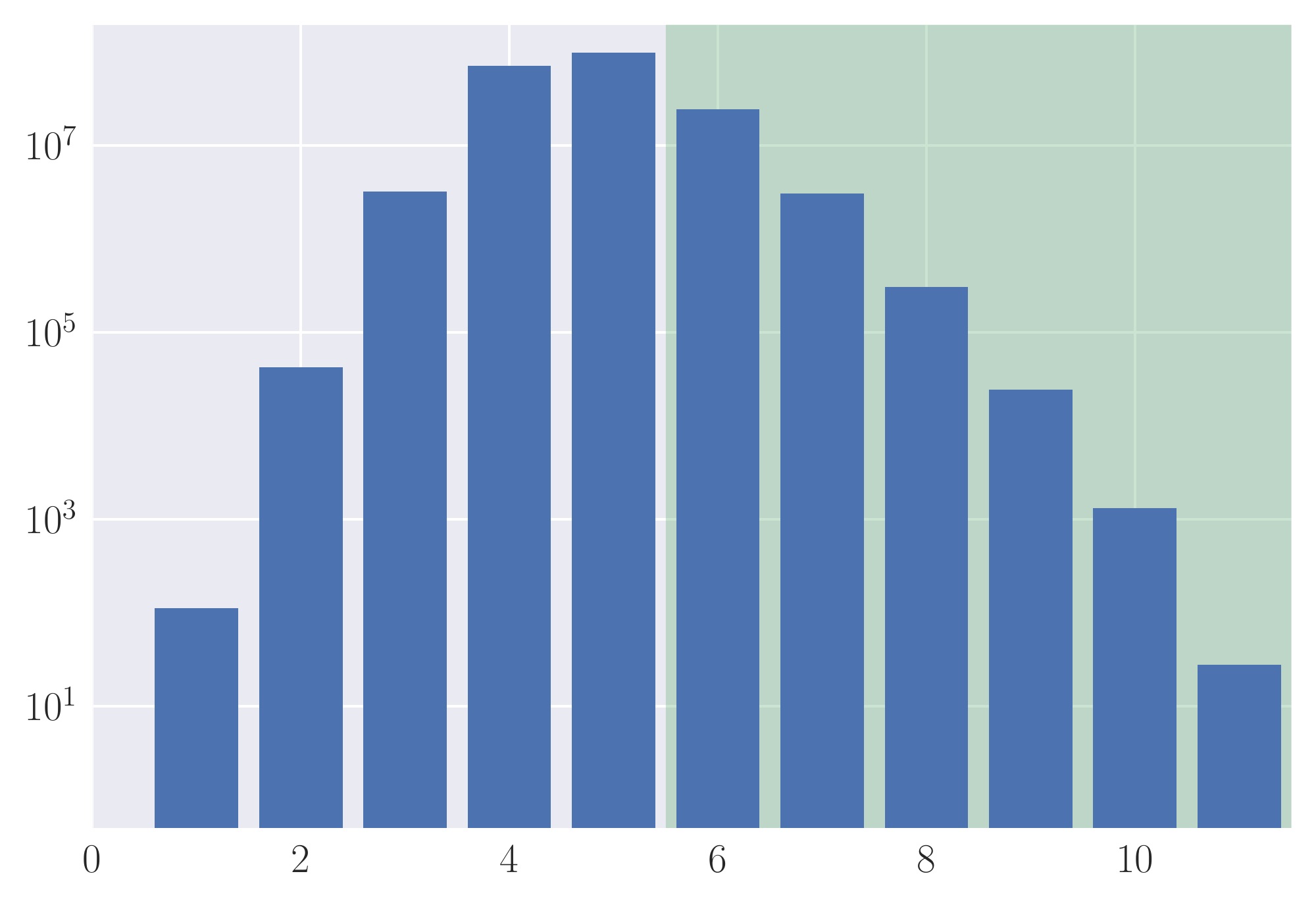}
    \caption{loc-brightkite}
	\label{fig:hist_brightkite}
\end{subfigure} \hfil
\begin{subfigure}{.48\textwidth}
    \centering
	\includegraphics[width=\linewidth]{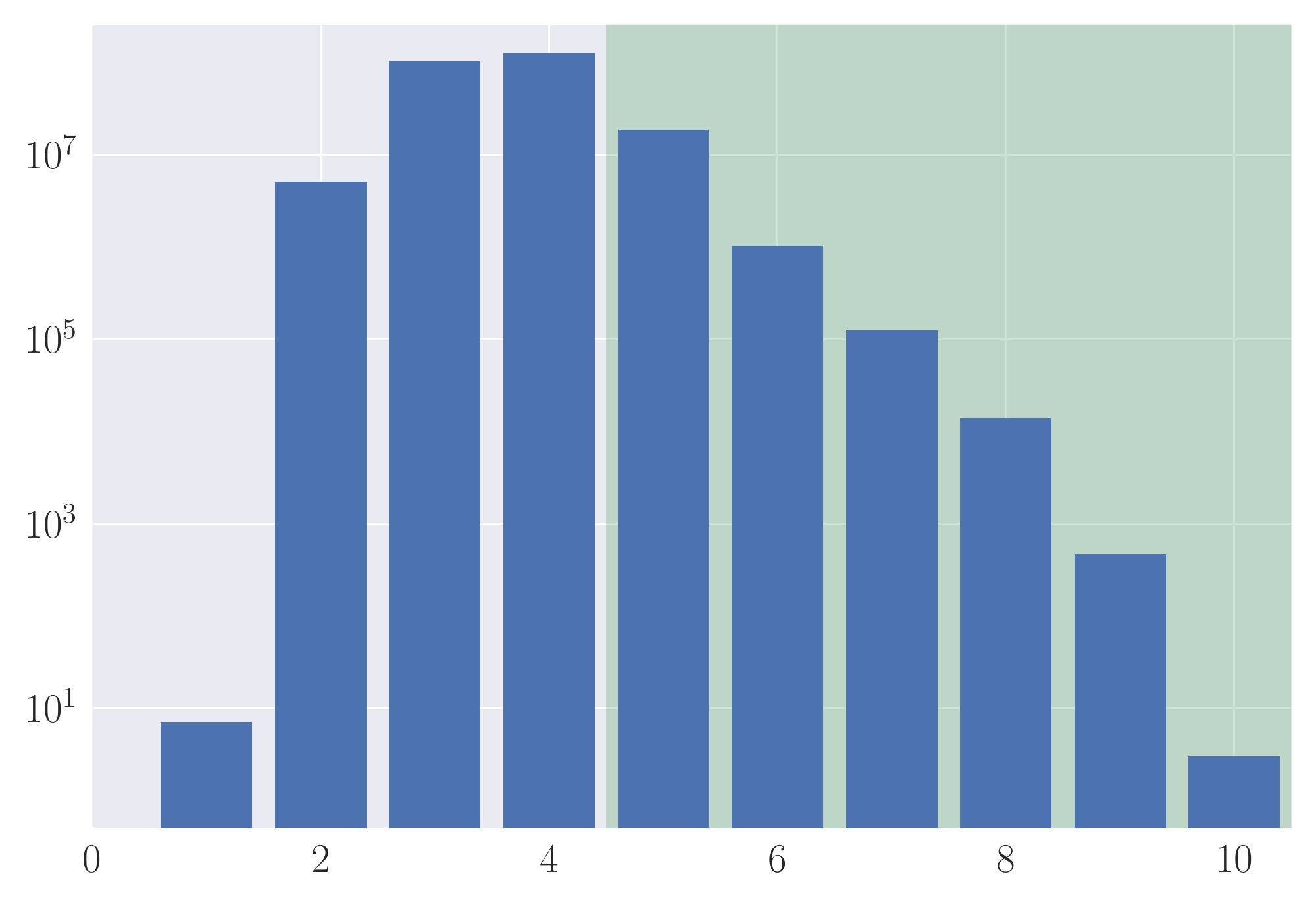}
    \caption{CAIDA\_as\_20130101}
	\label{fig:hist_caida}
\end{subfigure}
\caption{Histograms of distribution of far-apart pairs for selected graphs. On the x-axis we plot the distance and on the y-axis the number of far-apart node pairs that have this distance. The green area shows the distance range for pairs that have to be evaluated in our hyperbolicity algorithm. We can see that this area always excludes the peak of the histogram in the shown examples.}
\label{fig:hists}
\end{figure}

\section{Conclusion}
\label{sec:conclusion}
In this work, we developed a fundamental algorithm to iterate over all far-apart pairs in non increasing distance. As primary application we consider the computation of graph hyperbolicity.  Our new algorithm enables us to compute, for the first time, the hyperbolicity of some graphs with more than a hundred thousand nodes with non trivial structure (e.g., notreDame-d, loc-gowalla\_edges, com-dblp.ungraph).
We reduce the memory usage significantly, while not compromising on performance.
Non-trivial graphs with more than five hundred thousands nodes unfortunately still remain out of reach with our method. We thus plan to investigate alternative approaches in future work in order to get closer to the million nodes barrier. Furthermore, we believe that iterating over far-apart pairs in non increasing distance is such a fundamental task, that our work will enable faster algorithms also in other settings.


\bibliographystyle{plain}
\bibliography{biblio}

\appendix

\section{Time and space trade-offs for determining all far-apart pairs} \label{sec:tradeoffs}
In this section, we present algorithms offering different time and space trade-offs for the problem of computing all far-apart pairs. The space complexity considered here is the working memory, hence excluding the space needed to store the result, unless needed during computations.

A first algorithm to determine the set of far-apart pairs is to: 1) determine for each vertex $u\in V$ the set $F_u$ of $u$-far vertices using BFS, and then 2) check for each vertex $v\in F_u$ if $u$ is $v$-far (i.e., if $u \in F_v$). This can be done in time $\OO(nm)$ and space $\OO(n^2)$ since $|F_u| \in \OO(n)$.

Another algorithm is to execute two steps for each vertex $u\in V$:
1) determine the set $F_u$ of $u$-far vertices using BFS, and then 2) for each vertex $v\in F_u$, check if there is $w\in N(u)$ such that $\dd(u,v) < \dd(w,v)$. The second step requires to compute distances from $w$ and so the time complexity of this algorithm is $\OO((n + m)\sum_{u\in V}(1 + |N(u)|))=\OO(m^2)$. Observe that during the processing of each vertex $u$, this algorithm stores the set $F_u$, the distances from $u$, and the distances from one neighbor $w$ of $u$. Hence, the space complexity is $\OO(n)$.
The second method can be improved using the bit-parallel BFS proposed in~\cite{AkibaIY13}. Indeed, this algorithm computes simultaneously distances from $u$ and $b$ of its neighbors in time $\OO(n+m)$ and space $\OO(n)$, assuming that $b$ is a constant and using bitwise operations on bit vectors of size $b$ (typically 32 or 64). Hence, the number of BFSs to perform is reduced to $\sum_{u\in V}\left\lceil |N(u)| /b \right\rceil$.

Let us now show how to modify the above algorithm to obtain an algorithm with time complexity in $\OO(nm)$ and space complexity in $\OO(n\pw(G))$, where $\pw(G)$ denotes the pathwidth of $G$~\cite{dias2002,DBLP:journals/eatcs/Petit11}. The main idea is to compute the distances from $u$ only once and to store them during as few iterations as possible. For that, let $\pi: V\to [n]$ be a linear ordering of the vertices (i.e., a bijective mapping) related to the pathwidth as described later and let $\Pi(V)$ be the set of all such orderings. The algorithm iterates over the vertices in the order $\pi^{-1}(1), \pi^{-1}(2), \ldots, \pi^{-1}(n)$. The distances from $u$ are used for the processing of vertex $u$, and for the processing of each neighbor $w\in N(u)$.
Let $w_L \coloneqq \arg\min_{w \in N(u)} \pi(w)$ and $w_R \coloneqq \arg\max_{w \in N(u)} \pi(w)$.
Hence, distances from $u$ must be computed at iteration $\pi(w_L)$ and stored until iteration $\pi(w_R)$.
Consequently, at iteration $i$, we have stored distances from all the vertices in
\[
\{u\in V\mid \exists uv\in E\textrm{ s.t. } \pi(u)< i \leq \pi(v)\} \cup \{\pi^{-1}(i)\} \cup \{v\in V\mid \exists uv\in E\textrm{ s.t. } \pi(u)\leq i \leq \pi(v)\}.
\]
Now observe that the \emph{pathwidth}~\cite{dias2002,DBLP:journals/eatcs/Petit11} of a graph $G$ is defined as $\pw(G) = \min_{\pi\in \Pi(V)} p(G, \pi)$, where $p(G, \pi) = \max_{i=1}^n |\{u\in V\mid \exists uv\in E\textrm{ such that } \pi(u)< i \leq \pi(v)\}|$. 
Hence, at iteration $i$ we store distances from at most $2p(G,\pi)+1$ vertices and there is an ordering ensuring to store distances from at most $2\pw(G)+1$ vertices.
Consequently, the space complexity of this algorithm is in $\OO(n\pw(G))$ and its time complexity is $\OO(nm)$ assuming that the ordering $\pi$ such that $p(G,\pi) = \pw(G)$ is given. However, the problem of determining an ordering $\pi$ such that $p(G,\pi)=\pw(G)$ is NP-complete~\cite{dias2002,DBLP:journals/eatcs/Petit11}. Nonetheless, efficient heuristic algorithms have been proposed~\cite{DBLP:journals/jea/CoudertMN16,Castillo2018}.

Finally, recall that the \emph{bandwidth} of a graph $G$ is defined as $\bw(G) = \min_{\pi\in \Pi(V)} b(G,\pi)$, where $b(G, \pi)=\max_{uv\in E}|\pi(u)-\pi(v)|$~\cite{dias2002,DBLP:journals/eatcs/Petit11}. Therefore, distances from $u$ are stored for at most $2b(G, \pi)+1$ iterations, and there is an ordering ensuring that this number of iterations is at most $2\bw(G)+1$. However, the problem of determining such an ordering is NP-hard~\cite{GareyJohnson1979}.

\section{Retrieving distances from far vertices}
\label{sec:distfromfarvertices}

First note the following corollary which is a direct consequence of \Cref{lem:far}:

\begin{corollary} \label{cor:meanleaf}
For any $v \in V$, let $F_v$ be the set of $v$-far vertices. The number of leaves of any shortest path tree rooted at $v$ is at least $|F_v|$.
\end{corollary}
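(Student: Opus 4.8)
Proof proposal.

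The plan is to derive the statement directly from \Cref{lem:far}. Fix any vertex $v\in V$ and let $T$ be an arbitrary shortest path tree rooted at $v$. I want to show that $F_v$ is contained in the leaf set of $T$, which immediately gives the bound on the number of leaves.

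First I would take an arbitrary $u\in F_v$, i.e., $u$ is $v$-far. By the forward direction of \Cref{lem:far}, a $v$-far vertex is a leaf of \emph{every} shortest path tree rooted at $v$; in particular it is a leaf of the chosen tree $T$. Hence $u$ belongs to the set of leaves of $T$. Since $u$ was arbitrary, $F_v$ is a subset of the leaves of $T$, so the number of leaves of $T$ is at least $|F_v|$. As $T$ was an arbitrary shortest path tree rooted at $v$, the same lower bound holds for any such tree, which is exactly the claim.

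There is essentially no obstacle here: the only ingredient is \Cref{lem:far}, and the argument is a one-line containment of sets followed by comparing cardinalities. The only point to be mildly careful about is the quantifier — we only need that a $v$-far vertex is a leaf of the \emph{particular} tree under consideration, and the ``only if'' (``leaf of all shortest path trees'') direction of \Cref{lem:far} supplies exactly that, so no case analysis or extra construction is needed.
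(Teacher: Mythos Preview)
Your proposal is correct and matches the paper's approach exactly: the paper simply states that the corollary is a direct consequence of \Cref{lem:far}, and your argument spells out precisely that one-line containment of $F_v$ in the leaf set of an arbitrary shortest path tree rooted at $v$.
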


Observe however that the lower bound of \Cref{cor:meanleaf} does not imply the existence of a shortest path tree with $|F_v|$ leaves. For instance, consider the 4-cycle $(u_1, u_2, u_3, u_4)$. We have $|F_{u_1}| = |\set{u_3}| = 1$, but all shortest path trees rooted at $u_1$ have 2 leaves (either $\set{u_2, u_3}$ or $\set{u_3, u_4}$).

We now show that, given the $v$-far vertices and a constant $c$, one can determine the vertices at distance at least $c$ from $v$.

\begin{lemma}
Let $v \in V$ and let $S$ be the set of all vertices $u$ for which $\dd(u,v) \geq c$. Given the set $F_v$ of $v$-far vertices and the distances from $v$ to each of these vertices, one can compute the set $S$ in time $\mathcal{O}(\lvert F_v \rvert +  \sum_{u \in S} \lvert N(u) \rvert)$.
\end{lemma}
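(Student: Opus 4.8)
The plan is to perform a backward breadth-first search starting from the far vertices. The key observation, which is a consequence of \Cref{lem:far}, is the following: if $\dd(u,v) = k \geq 1$ and $u$ is \emph{not} $v$-far, then $u$ has a neighbor $u'$ with $\dd(u',v) = k+1$. Contrapositively, every vertex $u$ with $\dd(u,v) = k$ either is $v$-far, or lies in $N(\{w : \dd(w,v) = k+1\})$. Iterating this, every vertex at distance exactly $k$ from $v$ is reachable from some $v$-far vertex at distance $\geq k$ by a path on which the distance to $v$ strictly decreases at each step. Thus the vertices in $S$ (those at distance $\geq c$ from $v$) are exactly those reachable from $F_v \cap S$ by such descending paths, and they can be discovered layer by layer in decreasing order of distance to $v$.

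Concretely, I would initialize the search as follows. From the input we know, for each $u \in F_v$, the value $\dd(u,v)$; let $D = \max_{u \in F_v} \dd(u,v) = \ecc(v)$. We bucket the far vertices by distance, putting $u$ into bucket $B_{\dd(u,v)}$, for $\dd(u,v) \geq c$ only (this takes time $\OO(\lvert F_v\rvert)$; if no far vertex has distance $\geq c$ then $S = \emptyset$). We also maintain an array $\mathrm{dist}[\cdot]$, defined so far only on the far vertices. Then, for $k$ running from $D$ down to $c$, we process all vertices currently assigned distance $k$ (those in $B_k$ together with any vertex whose distance was set to $k$ during a previous iteration): for each such vertex $u$ we scan its neighborhood $N(u)$, and for each neighbor $u'$ with $\mathrm{dist}[u']$ not yet set, we set $\mathrm{dist}[u'] := k-1$ and add $u'$ to the working set for the next layer (provided $k-1 \geq c$). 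The set $S$ is exactly the set of vertices whose distance gets assigned a value $\geq c$ during this process, union $F_v \cap S$ — equivalently, all vertices ever placed in a layer.

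For correctness one must check two things. First, every vertex placed in a layer $k$ genuinely satisfies $\dd(u,v) = k$: this follows by downward induction on $k$ using the fact that a neighbor of a vertex at distance $k$ is at distance $k-1$, $k$, or $k+1$, combined with the observation above that guarantees the assigned value is consistent (a more careful argument: when $u'$ is first reached from a vertex $u$ at true distance $k$, then $\dd(u',v) \in \{k-1,k,k+1\}$; the claim that it equals $k-1$ requires ruling out the other cases, which follows because if $\dd(u',v) \geq k$ then $u'$ would itself have been processed at an earlier layer or be $v$-far with recorded distance, hence $\mathrm{dist}[u']$ would already be set). Second, every vertex at distance $\geq c$ is eventually discovered: by the descending-path property stated above. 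For the running time, each vertex of $S$ is placed into exactly one layer and has its neighborhood scanned exactly once when it is processed, contributing $\OO(\lvert N(u)\rvert)$; the bucketing of $F_v$ costs $\OO(\lvert F_v \rvert)$; so the total is $\OO(\lvert F_v\rvert + \sum_{u \in S}\lvert N(u)\rvert)$, as claimed.

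The main subtlety — and the point I would be most careful about in writing this up rigorously — is the correctness of the distance labels, i.e. justifying that the backward BFS assigns to each discovered vertex its \emph{true} distance to $v$ rather than some over- or under-estimate. The clean way to argue this is to first prove a structural lemma: for every $u$ with $\dd(u,v) = k$, either $u \in F_v$, or $u$ has a neighbor $u'$ with $\dd(u',v) = k+1$ (this is immediate from \Cref{lem:far}, since $u \notin F_v$ means some neighbor is strictly farther, and a neighbor is at most one farther). Unrolling this gives, for every $u$ with $\dd(u,v) = k$, a path $u = u_k, u_{k+1}, \ldots, u_D$ with $\dd(u_i, v) = i$ and $u_D \in F_v$. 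Hence $u$ is reached from $u_D \in F_v \cap S$ along the reverse of this path, and a straightforward induction shows that the layer index at which the algorithm places each $u_i$ equals $i$. Everything else is bookkeeping.
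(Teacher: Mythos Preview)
Your proposal is correct and follows essentially the same approach as the paper: a backward BFS from the $v$-far vertices, processed layer by layer in decreasing distance, with the structural observation (from \Cref{lem:far}) that any non-far vertex at distance $k$ has a neighbor at distance $k+1$. The paper's proof is in fact terser than yours---it simply states the layer-update rule (add $w\in N(u)$ to $L_{d-1}$ whenever $w\notin L_d\cup L_{d+1}$) and asserts the outcome---whereas you explicitly identify and handle the subtlety of showing that assigned distances are exact, which is the right thing to be careful about.
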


\begin{proof}
First, observe that if $u$ is $v$-far, then for each $w\in N(u)$ it holds that
\[
\dd(v, u) -1 \leq \dd(v, w)\leq \dd(v, u).
\]
Furthermore, a neighbor $w \in N(u)$ with $\dd(v, w) = \dd(v, u) - 1$ is not $v$-far, and if a neighbor $w' \in N(u)$ is $v$-far then $\dd(v, u) = \dd(v, w')$.

To determine all the nodes that are at distance at least $c$ from $v$, it suffices to perform a reverse breadth-first search, starting from far vertices.
More precisely, let $\{L_d\}_{d \in \{c, \dots, \ecc(v)+1\}}$ be a set family where $L_d$ is initialized with the $v$-far vertices at distance $d$ from $v$, and let $L_{\ecc(v) + 1} = \emptyset$.
Then, consider these sets in decreasing distance $d$ from $v$, starting with $d = \ecc(v)$, and stopping when $d = c-1$.
For each vertex $u \in L_d$, add each $w\in N(u)$ to $L_{d-1}$ for which $w\not\in L_{d}\cup L_{d+1}$.
At the end of this procedure, we have that if $\dd(v,u) = d$, then $u\in L_d$ and all the sets are disjoint, i.e. $L_d \cap L_{d'} = \emptyset$ for any $c\leq d,d'\leq \ecc(v)$ with $d \neq d'$. The claimed time bound follows immediately from the description of the algorithm.
\end{proof}

\end{document}